\tikzset{>=spaced stealth'}
\newcommand*{\T}{^{\mkern-1.5mu\mathsf{T}}}
\DeclareMathOperator{\cone}{cone}
\newcommand{\I}{\mathcal{I}}
\newcommand{\wt}[1]{\widetilde{ #1}}
\DeclareMathOperator{\poly}{poly}
\DeclareMathOperator{\conv}{conv}
\newcommand{\Target}{{\mbox{\boldmath$\beta$}}}
\newcommand{\boldalpha}{{\mbox{\boldmath$\alpha$}}}
\newcommand{\defproblem}[3]{
	\vspace{2mm}
	\vspace{1mm}
	\noindent\fbox{
		\begin{minipage}{0.95\textwidth}
			#1 \\
			{\bf{Input:}} #2  \\
			{\bf{Task:}} #3
		\end{minipage}
	}
	\vspace{2mm}
}
\newtheorem{theorem}{Theorem}
\newtheorem{lemma}[theorem]{Lemma}
\newtheorem{proposition}[theorem]{Proposition}
\newtheorem{corollary}[theorem]{Corollary}
\DeclareMathOperator{\rank}{rank}
\title{Sensitivity, Proximity and FPT Algorithms for Exact Matroid Problems}
\author{Friedrich Eisenbrand\footnote{EPFL, Lausanne, Switzerland} \and Lars Rohwedder\footnote{Maastricht University, Maastricht, Netherlands. Supported by Dutch Research Council (NWO) project “The Twilight
Zone of Efficiency: Optimality of Quasi-Polynomial Time Algorithms” [grant number OCEN.W.21.268]} \and Karol
W\k{e}grzycki\footnote{Saarland University and Max Planck Institute for Informatics,
        Saarbr\"ucken, Germany. 
    This work is part of the project TIPEA that has
    received funding from the European Research Council (ERC) under the European Unions Horizon
2020 research and innovation programme (grant agreement No. 850979).}
}
\date{}
\begin{document}
\maketitle
\begin{abstract}
  \noindent
    We consider the problem of finding a basis of a matroid with weight exactly equal to a given target.
    Here weights can be discrete values from $\{-\Delta,\ldots,\Delta\}$ or more generally $m$-dimensional
    vectors of such discrete values.
    We resolve the parameterized complexity completely, by presenting an
    FPT algorithm parameterized by $\Delta$ and $m$ for arbitrary matroids. Prior to our work, no such
    algorithms were known even when weights are in $\{0,1\}$, or arbitrary
    $\Delta$ and $m=1$.  Our main technical contributions are 
    new proximity and sensitivity bounds for matroid problems, independent of the number of elements. These bounds 
     imply FPT algorithms via matroid intersection.  
\end{abstract}

\section{Introduction}

Matroids are one of the most fundamental abstractions of combinatorial structures and capture
intricate set systems such as spanning trees and transversals while still offering tractability
for many related problems\footnote{We refer the reader to \cref{sec:preliminaries-2} and to~\cite{schrijver2003combinatorial} for basic definitions and results regarding matroids.}.
The well known Greedy algorithm can find a minimum (or maximum) weight basis of a matroid.
Inherently more difficult is the task of finding a basis $B$ with weight $w(B) = \sum_{b\in B} w(b)$
exactly equal to some given target $\beta \in \mathbb Z$.
A straightforward  reduction from Subset Sum shows that the problem is weakly NP-hard even for the most trivial examples of
matroids.
Papadimitriou and Yannakakis~\cite{papadimitriou1982complexity} first mention this and observe
that for $0,1$ weights the problem can still be solved efficiently via matroid intersection.
They also mention that this generalizes to a fixed number of equality constraints,
that is, given $m$-dimensional integral weight vectors $W(e) \in \{-\Delta,\dotsc,\Delta\}^m$ for each element $e$
and a target $\Target\in\mathbb Z^m$ the goal is to find a basis $B$ with $W(B) = \sum_{b\in B}W(b) = \Target$. 
Towards this, one can guess the correct number of elements for each distinct weight vector in time $O(n^{(2\Delta+1)^m})$ and then
solve matroid intersection in polynomial time where an additional partition matroid dictates the correct number of
elements of each weight vector.
Papadimitriou and Yannakakis also asked whether a pseudopolynomial time algorithm exists for
spanning trees. Via an algebraic algorithm that uses a variant of Kirchhoff's theorem
this is indeed possible~\cite{barahona1987exact} and generalizes to all linear
matroids~\cite{camerini1992random}.
This leads to an algorithm with running time $(n\Delta)^{O(m)}$ in the setting mentioned above.
Algebraic methods to solve exact weight problems were also mentioned by Mulmuley, Vazirani, and Vazirani~\cite{mulmuley1987matching}, who credit Lov\'asz for describing an algorithm for the exact matching problem.
In contrast, Doron-Arad, Kulik, and Shachnai~\cite{doronarad2023tight} very recently showed that the problem
cannot be solved in
pseudopolynomial time for arbitrary matroids (in the standard independence oracle model).

Similar settings as the above have also been studied extensively in the field of  approximation algorithms. 
Grandoni and Zenklusen~\cite{grandoni2010approximation} consider a range of multi-budgeted matroid problems, i.e.,
imposing linear inequalities with non-negative weights on different variants.
As they observe, the problem of finding any basis subject to
two such constraints is already weakly NP-hard. Thus, their focus is on finding any independent set and not
only bases.
For the problem of finding a maximum profit independent set subject to a fixed number of budget constraints
Grandoni and Zenklusen derive a PTAS.
EPTAS or FPTAS algorithms cannot exist due to a hardness result for 2-dimensional Knapsack~\cite{kulik2010there}.
Further, an EPTAS is known for a single budget constraint~\cite{doron2023eptasb,doron2023eptas}.
We note that while in some problems, most famously Knapsack, approximation schemes can easily be derived from
pseudopolynomial time algorithms, this is not true for multibudgeted independent set.

For the type of problems we mention above, the classical complexity and approximation algorithms are very well understood by now, but
the status through the lens of parameterized algorithms\footnote{We refer to~\cite{cygan2015parameterized} for background on parameterized (FPT) algorithms.}
is still unsatisfying with answers being unknown even for the following basic questions:
\begin{quote}
	For $0, 1$ weights and graphic matroids, is there an FPT algorithm in parameter $m$?

	For a single equality constraint, is there an FPT algorithm in parameter $\Delta$ for arbitrary matroids?
\end{quote}
We resolve the parameterized complexity completely by providing an FPT algorithm in parameters $\Delta$ and $m$ for
arbitrary matroids. This is an algorithm with a running time of the form $f(Δ,m) ⋅ \poly(n)$, where $f(Δ,m)$ is a function depending on these parameters $Δ$ and $m$ only,~see, e.g.~\cite{cygan2015parameterized}. In our case, $f(Δ,m) = (mΔ)^{O(Δ)^m}$.
\medskip 

\noindent 
We remark the connection to binary integer linear programming of the form
\begin{equation}
	\{x\in\{0, 1\}^n : Ax = b\},\label{eq:ilp}
\end{equation}
where $A\in \{-\Delta,\dotsc,\Delta\}^{m\times n}$ and $b\in\mathbb Z^m$.
This is a very simple example of a multidimensional exact matroid problem over the uniform matroid
with $2n$ elements and rank~$n$: elements $1,2,\dotsc,n$ correspond to the variables $x_1,\dotsc,x_n$ with
weight vector $W(i) = A_i$ and elements $n+1,\dotsc,2n$ that have a zero weight vector and are used to ensure
a solution is a basis. Despite the simplicity of the matroid, it is already non-trivial to find FPT algorithms
for integer~program~\eqref{eq:ilp} in parameters $\Delta$ and $m$.
Such results were obtained by Papadimitriou~\cite{papadimitriou1981complexity},
using a slightly stronger parameterization, and by Eisenbrand and Weismantel~\cite{eisenbrand2019proximity}
with only parameters $\Delta$ and $m$.
The latter work is heavily based on proximity and sensitivity, which turn out to be the key elements also in our work.
Throughout this document, we use the terms
proximity and sensitivity to describe the distance between
an optimal continuous solution and an optimal integer solution and between two integer solutions
with a similar right-hand side. Bounds on these quantities are useful algorithmically,
specifically to reduce the search space, but are also of independent interest, for example, to understand
how severe the effects of uncertainty in data can be for decision making, see e.g.~\cite{insua1991framework,triantaphyllou1997sensitivity}.
The study of general proximity and sensitivity bounds in integer linear programming goes back to
the seminal work by Cook, Gerards, Schrijver, and Tardos~\cite{cook1986sensitivity}.

\subsection{Our contribution}
\begin{figure}
	\centering
     \tdplotsetmaincoords{70}{100}
	\begin{tikzpicture}[scale=0.9, tdplot_main_coords, line join=round]
		\draw[thick] (-3.5, 3.5, 0) -- (-3.5, -3.5, 0);

		   \pgfmathsetmacro\a{1}
    \pgfmathsetmacro{\phi}{\a*(1+sqrt(5))/2}
    \path
    coordinate(A) at (0,\phi,\a)
    coordinate(B) at (0,\phi,-\a)
    coordinate(C) at (0,-\phi,\a)
    coordinate(D) at (0,-\phi,-\a)
    coordinate(E) at (\a,0,\phi)
    coordinate(F) at (\a,0,-\phi)
    coordinate(G) at (-\a,0,\phi)
    coordinate(H) at (-\a,0,-\phi)
    coordinate(I) at (\phi,\a,0)
    coordinate(J) at (\phi,-\a,0)
    coordinate(K) at (-\phi,\a,0)
    coordinate(L) at (-\phi,-\a,0)
    coordinate(N) at (0,\phi,0)
    coordinate(M) at (0,-\phi,0);

        \fill[lightgray]
        (A) -- (I) -- (B) --cycle
        (F) -- (I) -- (B) --cycle
        (F) -- (I) -- (J) --cycle
        (F) -- (D) -- (J) --cycle
        (C) -- (D) -- (J) --cycle
        (C) -- (E) -- (J) --cycle
        (I) -- (E) -- (J) --cycle
        (I) -- (E) -- (A) --cycle
        (G) -- (E) -- (A) --cycle
        (G) -- (E) -- (C) --cycle;
		
	\fill[gray] (N) -- (I) -- (J) -- (M) -- (L) -- (K) -- cycle;
	\draw[] (N) -- (I) -- (J) -- (M);

        \draw[]
        (A) -- (I) -- (B) --cycle
        (F) -- (I) -- (B) --cycle
        (F) -- (I) -- (J) --cycle
        (F) -- (D) -- (J) --cycle
        (C) -- (D) -- (J) --cycle
        (C) -- (E) -- (J) --cycle
        (I) -- (E) -- (J) --cycle
        (I) -- (E) -- (A) --cycle
        (G) -- (E) -- (A) --cycle
        (G) -- (E) -- (C) --cycle;

		\draw[thick] (-3.5, 3.5, 0) -- (3.5, 3.5, 0) -- (3.5, -3.5, 0) -- (-3.5, -3.5, 0);

		\node at (0,\phi + 0.25,0) {$x^*$};
		\node at (0,\phi + 0.25, - \a) {$B$};
		\node at (\phi,\a, 0.5) {$A'$};
		\node at (\phi,-\a, 0.5) {$A$};
		\node at (0, 4.5, 0) {$W x = \Target$};
		\node at (0, 1.5, 2) {$P_B(M)$};
	\end{tikzpicture}
	\caption{Schematic overview of proximity, sensitivity and their connection. The vertices of $P_B(M)$ (light gray) are the bases of $M$. The intersection (dark gray) of $P_B(M)$ with the affine subspace $\{x \in \mathbb R^E : Wx = \Target\}$ may contain non-integral vertices. For such a vertex $x^*$ by standard rounding there always exists a close by basis $B$ with $W(B) \approx \Target$. If there is a basis $A$ with $W(A) = \Target$ and the distance to $B$ (equivalently, to $x^*$) is sufficiently large, then by sensitivity there is a closer basis $A'$ also with $W(A') = \Target$. This implies proximity.}
	\label{fig:proximity}
\end{figure}
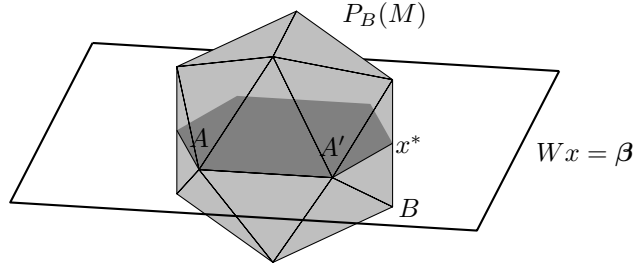
Let $M = (E, \I)$ be a matroid with (possibly multidimensional) weights $W(e)\in\{-\Delta,\dotsc,\Delta\}^m$.
In the following, we denote by $A\oplus B = (A\setminus B) \cup (B\setminus A)$ the symmetric difference of sets $A$ and $B$.
We prove the following sensitivity result.
\begin{theorem}[Sensitivity Theorem for Matroids] \label{thr:5}
	Let $A, B$ be bases of $M$. Then there exists a basis $A'$ with $W(A') = W(A)$ and
	\begin{equation*}
		|A'\oplus B| \le (2 m Δ)^{12 m}  \cdot \|W(B) - W(A)\|_1 .
	\end{equation*}
      \end{theorem}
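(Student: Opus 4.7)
The plan is to prove the bound by an iterative weight-preserving exchange argument: starting from $A$, I would show that whenever $|A \oplus B|$ exceeds $(2m\Delta)^{12m}\,\|W(B)-W(A)\|_1$, one can always replace $A$ by another basis $A''$ with $W(A'') = W(A)$ and $|A''\oplus B| < |A\oplus B|$. Iterating this step then produces the desired $A'$, and the theorem reduces to exhibiting a single such weight-neutral reduction whenever the symmetric difference is too large.

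To produce one such modification, I would first invoke the classical bijective matroid exchange theorem (Brualdi) to fix a bijection $\phi : A\setminus B \to B\setminus A$ such that $(A\setminus\{a\})\cup\{\phi(a)\}$ is a basis for every $a\in A\setminus B$. Each single swap carries an exchange vector $u_a := W(\phi(a)) - W(a) \in \{-2\Delta,\ldots,2\Delta\}^m$, and $\sum_{a\in A\setminus B} u_a = W(B)-W(A)$. A Steinitz-type reordering then forces all prefix sums $\sum_{a\in S'}u_a$ to stay within a box of radius $O(m\Delta)$ around the straight segment from $0$ to $W(B)-W(A)$. Once $|A\setminus B|$ exceeds the volume $(2m\Delta)^{O(m)}\cdot\|W(B)-W(A)\|_1$ of this tube, pigeonhole on prefix sums yields two prefixes with identical value and hence a contiguous block $S\subseteq A\setminus B$ with $\sum_{a\in S} u_a = 0$. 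This $S$ is the candidate weight-neutral modification.

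The main obstacle is matroid compatibility: the simultaneous swap $(A\setminus S)\cup\phi(S)$ need not be a basis, since general matroids are not strongly base-orderable, and independent single swaps cannot be freely combined. To overcome this I would realize $S$ through a sequence of atomic single-element exchanges inside the bipartite exchange graph on $A\oplus B$, iteratively re-synchronising $\phi$ with the fundamental circuits of $M$ so that every intermediate set remains a basis. This sequencing requires a further Steinitz/pigeonhole layer in order to simultaneously preserve the zero-sum property and feasibility of the atomic steps, and it is exactly this layering of Steinitz-style bounds on top of a matroid exchange argument that inflates the exponent from the naive $O(m)$ to $12m$. With the single-step reduction in place, finitely many iterations bring $|A\oplus B|$ down to at most $(2m\Delta)^{12m}\,\|W(B)-W(A)\|_1$, at which point we set $A'$ equal to the current basis and the theorem follows.
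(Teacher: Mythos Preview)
Your proposal has a genuine gap at exactly the point where the real difficulty lies. You correctly identify that the simultaneous swap $(A\setminus S)\cup\phi(S)$ need not be a basis for general matroids, and that this is the central obstruction. However, your proposed fix---``realize $S$ through a sequence of atomic single-element exchanges \dots\ iteratively re-synchronising $\phi$''---is precisely the step that does not go through. Once you perform a single swap, the Brualdi bijection $\phi$ must be recomputed from scratch against the new basis; the exchange vectors $u_a$ change, the zero-sum block $S$ you found loses its meaning, and there is no mechanism ensuring that the new zero-sum blocks continue to make progress toward $B$ rather than undoing previous swaps. The ``further Steinitz/pigeonhole layer'' you invoke is asserted, not described, and the paper's introduction explicitly singles out this line of attack (Steinitz-style cycle-finding on atomic exchanges, as in Eisenbrand--Weismantel) as one that works only for strongly base-orderable matroids and ``seems to be elusive'' in general. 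Nothing in your outline distinguishes your argument from that known dead end.

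The paper's proof takes a fundamentally different route and never attempts to sequence single-element swaps. Its key structural lemma (\cref{thr:1}, \cref{co:1}, and the multidimensional \cref{lem:several-exchanges}) shows that if $|A\setminus B|$ is large, then $A$ and $B$ contain large \emph{unicolor} subsets $A_i\subseteq A$, $B_i\subseteq B$ (all elements of $A_i$ share one weight vector, all of $B_i$ another) with $(A\setminus A_i)\cup B_i\in\I$, and one can collect several such exchange pairs whose difference vectors $\delta_i$ admit a nontrivial nonnegative integer combination summing to zero. The paper then forms a \emph{fractional} point $y^*=\sum_i \tfrac{1}{\ell+1}\chi((A\setminus A_i')\cup B_i')\in P_B(M)$, with the sizes $|A_i'|$ chosen so that $Wy^*=W(A)$ and so that $y^*$ has an \emph{integral} number of elements of each weight vector. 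Integrality of the matroid intersection polytope $P_B(M)\cap P_B(M_p)$, where $M_p$ is the partition matroid enforcing these color counts, then supplies an integral basis $A'\neq A$ of weight $W(A)$. The replacement of ``find a feasible zero-sum sequence of atomic swaps'' by ``find a fractional point with integer color counts and invoke matroid intersection'' is the new idea, and it is entirely absent from your outline.
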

\noindent       
Denote by $P_B(M)\subseteq [0, 1]^E$
the matroid base polytope, that is, the convex hull of indicator vectors $\chi(B)$ of the bases $B$ of $M$.
In particular, there is a one-to-one correspondence between integral elements of $P_B(M)$ and bases of $M$.
For convenience, we write $W ∈ ℤ^{m × n}$ as the matrix with columns $W(e)$ in the order the elements $e$ appear
as dimensions in $P_B(M)$. For $S ⊆ E$ we write $W(S) = ∑_{e ∈S} W(e)$. 
We prove the following proximity result, see also \cref{fig:proximity}.
\begin{theorem}[Proximity Theorem for Matroids]\label{thm:proximity}
	Let $A$ be a basis of $M$ and
	let $x^*$ be any vertex solution to the polytope 
	\begin{equation*}
		\{ x ∈ ℝ^n ： x\in P_B(M), \, Wx = W(A)\} .
              \end{equation*}
	There exists a basis $A'$ of $M$ that satisfies $W(A') = W(A)$ and               
	\begin{equation*}
		\| x^* - χ(A') \|_1 \le (2mΔ)^{13 m} .
	\end{equation*}
\end{theorem}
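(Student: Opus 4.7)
The plan is to apply \cref{thr:5} to a suitable rounding of $x^*$. I would first produce a basis $B$ whose indicator vector $\chi(B)$ is close to $x^*$; the identity $W(B) - W(A) = W(\chi(B) - x^*)$ then implies that $W(B)$ is close to $W(A)$. Invoking \cref{thr:5} on the pair $A, B$ yields a basis $A'$ with $W(A') = W(A)$ that is close to $B$, and the triangle inequality combines the two bounds to show that $\chi(A')$ is close to $x^*$. This is exactly the strategy depicted in \cref{fig:proximity}.

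The main step is to control the fractional support of $x^*$. Let $F$ be the unique face of $P_B(M)$ containing $x^*$ in its relative interior. By the standard correspondence between faces of matroid base polytopes and minors, there is a partition $E = T_0 \cup T_1 \cup E'$ such that the vertices of $F$ are precisely the bases of $M$ that avoid $T_0$, contain $T_1$, and restrict on $E'$ to a basis of the minor $M' = (M \setminus T_0)/T_1$. The fractional coordinates of $x^*$ are exactly the elements of $E'$, and $\dim F = |E'| - c(M')$, where $c(M')$ denotes the number of connected components of $M'$. Since $x^*$ is a vertex of $P_B(M) \cap \{Wx = W(A)\}$ lying in the relative interior of $F$, the $m$ equalities $Wx = W(A)$ must reduce the dimension of $F$ down to $0$ near $x^*$, forcing $\dim F \le m$. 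Every connected component of $M'$ contains at least two elements (a singleton would be a loop or coloop, contradicting membership in $E'$), so $c(M') \le |E'|/2$, and combining with $|E'| \le m + c(M')$ gives $|E'| \le 2m$.

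With fractional support at most $2m$, I would let $B$ be any basis that is a vertex of $F$. Since $\chi(B)$ and $x^*$ agree outside of $E'$ and differ coordinate-wise by at most $1$ on $E'$, we get $\|x^* - \chi(B)\|_1 \le |E'| \le 2m$. Because every entry of $W$ has absolute value at most $\Delta$, this also yields
\[
\|W(B) - W(A)\|_1 \;=\; \|W(\chi(B) - x^*)\|_1 \;\le\; m\Delta \cdot \|\chi(B) - x^*\|_1 \;\le\; 2m^2\Delta.
\]
Applying \cref{thr:5} to the pair $A, B$ produces a basis $A'$ with $W(A') = W(A)$ satisfying $|A' \oplus B| \le (2m\Delta)^{12m} \cdot 2m^2\Delta$. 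Since $\chi(A')$ and $\chi(B)$ are $0/1$ vectors, $\|\chi(A') - \chi(B)\|_1 = |A' \oplus B|$, and the triangle inequality together with a routine arithmetic simplification gives
\[
\|x^* - \chi(A')\|_1 \;\le\; 2m + (2m\Delta)^{12m} \cdot 2m^2\Delta \;\le\; (2m\Delta)^{13m},
\]
as desired.

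The hard part will be the face-structure argument establishing $|E'| \le 2m$: it requires invoking the correspondence between faces of $P_B(M)$ and minors, together with the component-size observation that rules out singleton components of $M'$. Once the support bound is in hand, the rounding, the estimate on $\|W(B) - W(A)\|_1$, and the final application of \cref{thr:5} are essentially mechanical.
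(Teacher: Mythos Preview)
Your high-level strategy—round $x^*$ to a nearby basis $B$, bound $\|W(B)-W(A)\|_1$, invoke \cref{thr:5}, and finish by the triangle inequality—is exactly the paper's. The paper obtains $B$ differently: it proves a short rounding lemma (\cref{lem:1}) showing that any point on a $d$-dimensional face of $P_B(M)$ is within $\ell_1$-distance $d$ of some vertex of that face, by induction using that adjacent vertices of $P_B(M)$ differ in exactly two coordinates. This yields $\|\chi(B)-x^*\|_1\le m$ directly, with no structural analysis of the face.

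Your alternative via the fractional support has the right target inequality $|E'|\le 2m$, but the face characterization you invoke is false as stated. It is \emph{not} true that the vertices of $F$ are exactly the bases of the single minor $(M\setminus T_0)/T_1$. Take $M$ to be the cycle matroid of $K_4$ and $F$ the facet $\sum_{e\in S}x_e=2$ for a triangle $S$: the centroid of $F$ has all six coordinates strictly between $0$ and $1$, so $T_0=T_1=\emptyset$ and your $M'$ would be $M$ itself with $c(M')=1$, predicting $\dim F=5$, whereas in fact $\dim F=4$. The correct statement is that every face $F$ of $P_B(M)$ is itself the base polytope of some matroid $N$ on $E$ (a direct sum of interval minors along a flag of flats; here $N\cong U_{2,3}\oplus U_{1,3}$). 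Removing the loops and coloops of $N$ gives a loopless, coloopless matroid $N'$ on $E'$, so each component of $N'$ has at least two elements, $c(N')\le|E'|/2$, and hence $|E'|=\dim F+c(N')\le 2\dim F\le 2m$. With $M'$ replaced by this $N'$ your argument goes through, though with the slightly weaker bound $\|\chi(B)-x^*\|_1\le 2m$ in place of the paper's $m$.
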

These sensitivity and proximity bounds are in the same order of magnitude as those known for~\eqref{eq:ilp},
see e.g.~\cite{eisenbrand2019proximity}.
From the proximity theorem we derive the following FPT algorithms.
\begin{theorem}\label{thm:alg}
	For target $\Target\in\mathbb Z^m$
	there is an algorithm that finds a basis $A$  of $M$ with $W(A) = \Target$, if one exists,
    in time
	\begin{equation*}
		\Delta^{O(\Delta)^m} \cdot n^{O(1)} \ .
	\end{equation*}
	Furthermore, if $M$ is linear (with a given representation), it can be
	improved to 
	\begin{equation*}
		(m\Delta)^{O(m^2)} \cdot n^{O(1)}
	\end{equation*}
    randomized time.
\end{theorem}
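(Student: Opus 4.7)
The plan is to use \cref{thm:proximity} to confine the search for a basis $A$ with $W(A) = \Target$ to a region small enough for matroid intersection (in the general case) or an algebraic identity test (in the linear case).

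First I would solve the feasibility LP $\{x \in P_B(M) : Wx = \Target\}$ by the ellipsoid method; separation over $P_B(M)$ is polynomial-time implementable from the independence oracle via submodular minimization on the rank function. If infeasible, return NO. Otherwise let $x^*$ be a vertex solution. Partition by weight, $E_w := \{e \in E : W(e) = w\}$, and set $c_w := \sum_{e \in E_w} x^*_e$. By \cref{thm:proximity}, whenever the problem is feasible there is a basis $A$ with $W(A) = \Target$ and $\|x^* - \chi(A)\|_1 \le k := (2m\Delta)^{13m}$, so $|a_w - c_w| \le k$ for every weight class $w$, with $a_w := |A \cap E_w|$.

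For general matroids I would enumerate all nonnegative integer tuples $(a_w)_w$ satisfying $|a_w - c_w| \le k$, $\sum_w a_w = \rank(M)$, and $\sum_w a_w \cdot w = \Target$. Their number is at most $(2k+1)^{(2\Delta+1)^m} \le (m\Delta)^{O(\Delta)^m}$ (using $m \le (2\Delta)^m$ to absorb the linear factor into the exponent). For each tuple I build the partition matroid $N$ on $\{E_w\}_w$ with capacities $a_w$ and call a polynomial-time matroid intersection routine for $M$ and $N$; every common basis automatically has weight $\Target$. Correctness is immediate: the weight-profile of the basis $A$ promised by \cref{thm:proximity} is enumerated.

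For linear matroids the $\Delta^m$ in the exponent is too coarse, so I would replace enumeration by an algebraic approach. Round $x^*$ to a basis $B$ close to $x^*$, pivot the representation so that $C_B = I$, and set $N := C_{E \setminus B}$. A Cauchy--Binet expansion gives
\begin{equation*}
\det\bigl(I + u\, D_B^{-1} N\, D_{\bar B}\, N^\T\bigr) \;=\; \sum_{d \ge 0} u^d \sum_{|T|=|S|=d} \det(N[T,S])^2 \, y^{W(S) - W(T)},
\end{equation*}
where $D_B, D_{\bar B}$ are diagonal matrices carrying the Laurent monomials $\prod_i y_i^{W_i(e)}$. By \cref{thm:proximity} only terms with $d \le k$ can realize an $A = (B \setminus T) \cup S$ with $W(A) = \Target$, so I truncate modulo $u^{k+1}$; the degree in each $y_i$ then lies in $[-k\Delta, k\Delta]$. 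Interpolation over a grid of $(2k\Delta+1)^m \cdot (k+1)$ evaluation points, each a single $r \times r$ determinant, extracts the coefficient of $u^d y^{\Target - W(B)}$ in total time $(k\Delta)^{O(m)} \cdot n^{O(1)} = (m\Delta)^{O(m^2)} \cdot n^{O(1)}$. Correctness follows from the Schwartz--Zippel lemma applied over a sufficiently large field.

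The main obstacle is the linear case: turning the $\ell_1$-proximity bound into a combinatorial bound $|A \oplus B| \le O(k)$ for some explicitly computable basis $B$ requires choosing $B$ so that the fractional coordinates of $x^*$ live on a low-dimensional face of $P_B(M)$; this step is not immediate from \cref{thm:proximity} as stated and will likely reuse machinery from its proof (in particular a controlled rounding of $x^*$ along the face it lies on). Once that is in place, the algebraic manipulation, the truncation argument, and the standard Schwartz--Zippel analysis are routine.
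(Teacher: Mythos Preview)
Your treatment of the general case is essentially the paper's argument: compute a vertex $x^*$, use proximity to confine each weight-class count to an interval of length $(m\Delta)^{O(m)}$, enumerate the $(m\Delta)^{O(\Delta)^m}$ profiles, and solve matroid intersection with a partition matroid. Nothing to add there.

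For the linear case your route differs genuinely from the paper's. The paper does not build a multivariate algebraic identity. Instead it \emph{aggregates} the $m$ constraints into a single one via $\lambda = (1, (2\Gamma\Delta+1), (2\Gamma\Delta+1)^2, \ldots)$, where $\Gamma$ is the (guessed) $\ell_1$-distance of the sought basis to $\lfloor x^* \rceil$; proximity bounds $\Gamma$ by $(m\Delta)^{O(m)}$, so the aggregated weights are bounded by $n\cdot(m\Delta)^{O(m^2)}$. An additional $\pm 1$ weight is folded in to enforce $\|\chi(B)-\lfloor x^*\rceil\|_1 = \Gamma$, and then the one-dimensional pseudopolynomial algorithm of Camerini et al.\ is called as a black box. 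Your approach instead re-derives a multidimensional analogue of that algebraic algorithm and uses proximity to truncate the degree; both reach the same running time, but the paper's aggregation is cleaner and avoids the need to find an explicit close basis~$B$ at all (it works relative to the non-basis point $\lfloor x^*\rceil$).

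Two remarks on your version. First, the obstacle you flag---producing a basis $B$ with $|A\oplus B|$ bounded---is exactly what the paper's face-rounding lemma (stated separately from \cref{thm:proximity}) provides: since $x^*$ is a vertex of the intersection it lies on an at-most-$m$-dimensional face of $P_B(M)$, and walking along adjacent-vertex directions yields a basis $B$ with $\|\chi(B)-x^*\|_1\le m$; the proof is constructive. Second, your Schwartz--Zippel step is underspecified: as written, the coefficient you extract is a \emph{fixed} sum of squared minors, so there is nothing random to test. Over $\mathbb{Q}$ this sum of squares is nonzero iff a basis exists and no randomness is needed (but bit growth must be controlled); over a finite field squared minors can cancel, and one must introduce random element multipliers $\xi_e$ in the diagonal so that distinct bases contribute distinct monomials in~$\xi$. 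Without that, the identity test can give false negatives over finite fields.
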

Due to its generality, \cref{thm:alg} can be used to obtain FPT algorithms for many concrete
applications. To name a few, it improves an FPT algorithm for Feedback Edge Set with Budget
Vectors due to Marx \cite{marx2009parameterized},
generalizes a recent algorithm proposed by Liu and Xu
\cite{liu2023congruency} for Group-Constrained Matroid Base to arbitrary finite groups,
and generalizes so-called combinatorial $n$-fold integer programs~\cite{knop2020combinatorial}, which have a wide range of applications
themselves.
We refer to \cref{sec:applications} for details.

There is a simple example based on long even cycles that shows that both sensitivity and proximity
are unbounded for bipartite perfect matching, which is a special case of matroid intersection,
even with a single $0, 1$ weight constraint. For details, see \cref{sec:intersection}.

We remark that the proximity bounds in~\cite{cook1986sensitivity,eisenbrand2019proximity} also
hold in the optimization version, i.e., when we measure for some linear optimization direction the
distance between optimal continuous solution and the closest optimal integer solution,
and this optimization direction can have arbitrary real coefficients, i.e., they are not necessarily discrete
as the equality constraints. This is interesting also from a polyhedral perspective.
It remains open whether the optimization variant also admits the strong proximity in our setting.
While this is not obviously connected to this question, there are also other disparities between feasibility
and optimization in exact matroid problems. Namely, the algebraic techniques, which are the only
known approach to solve exact matroid basis in pseudopolynomial time on linear matroids (and even graphical matroids),
are also not capable of optimization.

\subsection{Techniques}
Our main technical contribution lies in the sensitivity and proximity bounds.
We will briefly review previous techniques, specifically those by
Cook et al.~\cite{cook1986sensitivity} and by Eisenbrand and Weismantel~\cite{eisenbrand2019proximity}.
Suppose for some solution $x$ to an integer linear program we want to prove a bound on the distance to the
closest solution $z$, where the right-hand side is slightly perturbed.
One can naturally decompose the change from $z$ to $x$ into atomic changes that involve the increase or
decrease of a variable by $1$.
On a high level, Cook et al.'s approach can be summarized as defining small bundles of atomic changes that
do not affect feasibility.
In their case they use Hilbert basis elements of a carefully chosen cone and write $x - z$ as
a (not necessarily integral) conic combination
of these bundles. If the distance
between $z$ and $x$ is sufficiently large, one of the bundles is taken at least once and then we can also apply only
this bundle exactly once to $z$, proving that there is a closer solution.
Crucial to this argument is that each bundle of atomic changes can be applied
to $z$ independently.
In an inherently different approach, Eisenbrand and Weismantel arrange the atomic changes in a careful sequence
given by the Steinitz Lemma and achieve that if the distance between $x$ and $z$ is sufficiently large,
by pigeonhole principle there will be a ``cycle'' in the sequence, which is
an interval of atomic changes that does not affect feasibility.
This cycle is applied to $z$, which then also proves that there is a closer solution to $x$.

In the context of matroids, a natural candidate for an atomic change is the exchange of a pair of elements.
Both of the approaches above are applicable to restrictive classes of matroids, specifically \emph{strongly
base-orderable} matroids, see~\cite[42.6c]{schrijver2003combinatorial}.
Essentially,   this limited class 
allows any subset of the atomic changes to be applied to $z$ in isolation. 
Implementing this approach to general matroids seems to be elusive. 

Our proof is based on a novel  structural result on matroids. If two independent sets $A$ and $B$ are of the same cardinality and are roughly of the same weight, then there are large \emph{unicolor} subsets of them of equal cardinality, that can be mutually exchanged (\cref{thr:1} and \cref{co:1}). A set is unicolor if all elements have the same weight. This is a key notion of this paper.  Via identifying several such exchanges that nullify the total change of weight, and the theory of matroid intersection~\cite{edmonds1979matroid,lawler1975matroid}, we can then derive the existence of another independent set of the matroid of equal cardinality that has the same weight as $A$. Specifically in higher dimension, we furthermore rely on polyhedral combinatorics like Carathéodory's theorem and bounds on the complexity of facets and vertices. 

\section{Preliminaries}  
\label{sec:preliminaries-2}

A \emph{matroid}  $M = (E,\I)$ is defined by a \emph{ground set} $E = \{1,2,\dotsc,n\}$ and a collection $\I
\subseteq 2^{E}$ of independent sets that satisfy the following properties:
\begin{enumerate}[(M1)]
\item $\emptyset \in \I$. \label{item:3}
\item If $X \subseteq Y$ and $Y \in \I$, then $X \in \I$. \label{item:4}
\item If $X, Y \in \I$ and $|X| < |Y|$, then there exists $e \in Y\setminus X$ such that $X \cup \{e\} \in \I$. \label{item:5} 
\end{enumerate}
Condition~(M\ref{item:5}) is the \emph{exchange property}. We often use the following elementary consequence.
\begin{lemma}[Downsizing] 
  \label{lem:3}
  Let $I ∈ \I$ be an independent set $A ⊆ I$ and $B ⊆ E \setminus I$
  of equal cardinality $|B| = |A|$ and suppose that
  $(I\setminus A) ∪B ∈ \I$.  Then, for each $A'⊆A$ there exists
  $B' ⊆B$ with
  \begin{enumerate}[i)]
  \item $|A'| = |B'|$ and  \label{item:1}
  \item  $(I \setminus A') ∪B' ∈ \I$.  \label{item:2}
  \end{enumerate}
  Similarly, for each $B'⊆B$ there exists $A' ⊆A$  satisfying \ref{item:1}) and \ref{item:2})  
\end{lemma}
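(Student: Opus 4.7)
The plan is to prove this lemma directly from the exchange property (M\ref{item:5}), applied iteratively, without invoking any additional matroid machinery. The crucial observation is that the hypothesis $(I\setminus A)\cup B \in \I$ together with $|A|=|B|$ means that $J := (I\setminus A)\cup B$ is an independent set of the same cardinality as $I$, so $I$ and $J$ may be used symmetrically as the two sides of the exchange property.

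For the first statement, fix $A' \subseteq A$ and start from $X_0 := I \setminus A'$, which is independent by (M\ref{item:4}) and has size $|I|-|A'|$. I would grow $X_0$ one element at a time, each time augmenting from $J$, and keep track of which element enters. The key set-theoretic identity to check at the outset is
\begin{equation*}
  J \setminus X_0 \;=\; \bigl[(I\setminus A)\cup B\bigr]\setminus (I\setminus A') \;=\; B,
\end{equation*}
using $A'\subseteq A$ so that $I\setminus A \subseteq I\setminus A'$ and $B\cap I = \emptyset$. After $k < |A'|$ augmenting steps producing $X_k = X_0 \cup \{e_1,\dots,e_k\}$ with all $e_i \in B$, the same identity gives $J\setminus X_k = B\setminus\{e_1,\dots,e_k\}$. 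Since $|X_k| < |J|$, property (M\ref{item:5}) yields some $e_{k+1}\in J\setminus X_k \subseteq B$ with $X_{k+1}$ independent. After exactly $|A'|$ steps, the set $B' := \{e_1,\dots,e_{|A'|\}}\subseteq B$ has $|B'|=|A'|$ and $(I\setminus A')\cup B' = X_{|A'|} \in \I$, proving \ref{item:1}) and \ref{item:2}).

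For the symmetric statement, the same argument runs with the roles of $(I,A)$ and $(J,B)$ swapped: given $B'\subseteq B$, start from $X_0 := J\setminus B'$, verify $I\setminus X_0 = A$ by the analogous computation, and iteratively augment from $I$ using (M\ref{item:5}). Each augmenting element lies in $A$, so after $|B'|$ steps we obtain $A'\subseteq A$ with $|A'|=|B'|$ and $(I\setminus A')\cup B' \in \I$.

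The only real obstacle is the bookkeeping: one must check that at every intermediate step the exchange element supplied by (M\ref{item:5}) can be chosen inside $B$ (respectively $A$), and not somewhere in the already-present part of $I$. This is exactly what the set identity $J\setminus X_0 = B$ ensures, and it is preserved under the iteration because every element we add lies in $B$. No stronger property of the matroid is needed.
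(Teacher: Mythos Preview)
Your argument for the first claim is correct and is essentially the paper's: start from $I\setminus A'$, augment repeatedly from $J=(I\setminus A)\cup B$, and use $J\setminus(I\setminus A')=B$ to see that every augmenting element lands in $B$.

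The second claim, however, does not follow from the symmetry you invoke. Swapping $(I,A)\leftrightarrow(J,B)$ and running the same procedure from $X_0=J\setminus B'$ does yield, after $|B'|$ augmentations from $I$, an independent set of size $|I|$ containing some $|B'|$ elements of $A$. But that set is
\[
(J\setminus B')\cup A' \;=\; (I\setminus A)\cup(B\setminus B')\cup A',
\]
which meets $B$ in $B\setminus B'$, not in $B'$; it is \emph{not} of the form $(I\setminus A'')\cup B'$ for any $A''\subseteq A$. So the concluding assertion ``$(I\setminus A')\cup B'\in\I$'' is unjustified.

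The fix is to pick a different starting set. Take $X_0=(I\setminus A)\cup B'$, which is a subset of $(I\setminus A)\cup B\in\I$ and hence independent, of size $|I|-|A|+|B'|$. Augment from $I$: since $I\setminus X_0=A$, every new element lies in $A$, and after $|A|-|B'|$ steps you have added some $A''\subseteq A$ with $|A''|=|A|-|B'|$. Setting $A'=A\setminus A''$ gives $|A'|=|B'|$ and
\[
(I\setminus A')\cup B' \;=\; (I\setminus A)\cup A''\cup B' \;=\; X_0\cup A'' \in \I,
\]
which is exactly what is needed. This is the route the paper takes.
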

\begin{proof}
The set $(I\setminus A')$ is an independent set.
	We can identify $|A'|$ elements of $((I\setminus A) \cup B) \setminus (I\setminus A') = B$ that can be added to this set. 
  Similarly, $(I\setminus A) ∪B'$ is an independent set
	and can be extended to the cardinality of $I$. 
\end{proof}

\noindent 
A \emph{basis} of $M$ is an inclusionwise maximal set of $\I$. All the bases of a matroid have the same cardinality, denoted by $\rank(M)$,  the \emph{rank} of  $M$.
For a subset $S ⊆ E$, its  \emph{characteristic vector}  $χ(S) ∈ \{0,1\}^n$ is defined as
\begin{displaymath}
  χ(S)_i  =
  \begin{cases}
    1, \text{ if } i ∈ S,\\ 
    0 \text{ otherwise}. 
  \end{cases}
\end{displaymath}
The \emph{rank} of a subset  $S⊆ E$ is the rank of the matroid $(S, \I')$, where $\I' = \{ I ∩ S： I ∈ \I\}$ and we denote it by $\rank(S)$. 
The \emph{matroid polytope} $P(M) ⊆ ℝ^n$ is the convex hull of the
characteristic vectors $χ(A) ∈ \{0,1\}^n$ of independent sets $A ∈\I$.
Edmonds~\cite{edmonds1970submodular} has shown that $P(M)$  is described by the following set of inequalities  
\begin{equation}
  \label{eq:9}
    \begin{array}{ll}
      ∑_{e ∈ S} x_e  ≤   \rank(S), & S ⊆ E \\ 
                 x_e  ≥  0, &  e ∈ E.                  
    \end{array}
  \end{equation}
  The convex hull of the characteristic vectors of bases of $M$ is obtained from~\eqref{eq:9} by adding the equation $∑_{e ∈ E} x_e = \rank(M)$ and we denote this  \emph{base polytope} by $P_B(M)$. 
  
\medskip
\noindent
An \emph{($m$-dimensional) weight} of a matroid is given by a matrix $W ∈ ℤ^{m × n}$.  The weight of a subset $S ⊆ E$ is defined as $W(S) = W ⋅ χ(S)$. A \emph{target} is an integral vector $\Target ∈ ℤ^m$. A subset $S ⊆ E$ is \emph{exact} for $W$ and $\Target$, if $W(S) = \Target$. We also refer to the condition $W(S) = \Target$ as imposing $m$ \emph{constraints}. The largest absolute value of an entry of the matrix $W ∈ ℤ^{m ×n}$ is denoted by $\|W\|_∞$. 
  
\section{FPT Algorithms}
\label{sec:algorithms}

In this section we discuss how the proximity bound in Theorem~\ref{thm:proximity} can be used to derive the FPT algorithms, thereby proving Theorem~\ref{thm:alg}. 
Let $M = (E,\I)$ be a matroid, $W ∈ \{-Δ,\dots,Δ\}^{m ×n}$ be a matrix and $\Target ∈ ℤ^m$ be a target vector. 
The goal is to 
  find  a basis $B ∈ \I$ of the matroid $M$ with
  \begin{equation*}
    \label{eq:8}  
    W(B) = \Target ,
  \end{equation*}
  or to assert that such a basis does not exist.
Here $W(B)$ is the sum of all columns of the matrix $W$ that correspond to elements of $B$. Thus $W(B) = W ⋅ χ(B)$, where $χ(B) ∈ \{0,1\}^n$ denotes characteristic vector of the basis $B$.

\medskip 

We first assume that the matroid $M$ is \emph{implicitly  given} by an
\emph{independence testing oracle}~\cite[Section
40.1]{schrijver2003combinatorial}.  
 This means that testing  $S ∈ \I$  for a subset $S ⊆ E$ can be done in constant time and no other information on $M$ can be queried.  At the end of this section, we treat the second part of Theorem~\ref{thm:alg}, where a linear matroid with explicit representation given and faster algorithms can be derived. 

\medskip   
With the ellipsoid method, we  compute a vertex $x^*$ of the base polytope $P_B(M)$ intersected with the subspace $\{ x ∈ ℝ^n ： W x = \Target\}$
in time $\poly(n + m\log |Δ|)$ using~\cite{khachiyan1980polynomial,grotschel2012geometric}. Recall that the base polytope is the convex hull of incidence vectors of bases $P_B(M)  = \conv \{ χ(B)： B \text{ basis of }M\}$.
  Theorem~\ref{thm:proximity} implies that, if there exists a basis
  $A ∈ \I$ of the matroid $M$ with weight $W(A) = \Target$, then there
  exists a such basis with
  \begin{equation}
    \label{eq:7-new}    
    \|χ(A) - x^* \|_1 ≤ (m\Delta)^{O(m)}.
  \end{equation}  
  For $\boldalpha ∈ \{-Δ,\dots,Δ\}^m$ and $S ⊆ E$, let $ℓ_\boldalpha(S) ∈ ℕ_0$ denote the number of elements of weight $\boldalpha$ in $S$. In other words,
  \begin{displaymath}
    ℓ_\boldalpha(S) = | \{ e ∈ E ： W(e) = \boldalpha\} |.
  \end{displaymath}
  The proximity condition~\eqref{eq:7-new} implies
  \begin{align*}
	  ∑_{e ∈ E ： W(e) = \boldalpha} \hspace{-1em} x^*_e - (mΔ)^{O(m)} &≤ ℓ_\boldalpha(A) \\
	  &≤ ∑_{e ∈ E ： W(e) = \boldalpha} \hspace{-1em} x^*_e + (mΔ)^{O(m)} .
\end{align*}
We can guess the correct value $ℓ_\boldalpha(A)$ out of a set of
$(mΔ)^{O(m)}$ candidates for each $\boldalpha$. Since $\boldalpha ∈ \{-Δ,\dots,Δ\}^m$ this leaves
a total number of $(m\Delta)^{O(m) \cdot O(\Delta)^m} = \Delta^{O(\Delta)^m}$ vectors among one encodes
the values $ℓ_\boldalpha(A)$ for each $\boldalpha$. 
Assume that we have one such candidate. Observe that the condition
\begin{displaymath}
  ∑_{\boldalpha} ℓ_\boldalpha(A) = \rank(M)
\end{displaymath}
must be satisfied. We next consider the partition matroid $M_p = (E, \I_p)$ with independent sets
\begin{displaymath}
  \I = \left\{ S ⊆ E \, ：ℓ_\boldalpha(S) ≤ ℓ_\boldalpha,\boldalpha\in\{-\Delta,\dotsc,\Delta\}^m \right\}.
\end{displaymath}
We are looking for  a basis  $A$ of $M$ that is also contained in $M_p$. This is a \emph{matroid intersection} problem for  $M ∩M_p$ and can be solved in  time polynomial in $n$~\cite{edmonds1979matroid,lawler1975matroid}, see also~\cite{schrijver2003combinatorial}.  All together, this shows that we can find a basis $B$ of weight $W(B) = \Target$ or assert that no such basis exists, in time  $Δ^{O(Δ)^m} n^{O(1)}$.  This proves the first part of Theorem~\ref{thm:alg}.

\subsection{Linear matroids}
\emph{Linear matroids} are matroids that can be defined by a matrix $A$ over a
field $\mathbb{F}$, such that the ground set $E$ is the set of columns of $A$
and $X$ is an independent set if these columns are linearly independent. If a
matroid can be defined by a matrix $A$ over a field $\mathbb{F}$, then we say
that the matroid is \emph{representable} over $\mathbb{F}$. In this paper, when
we consider linear matroids, we always assume that the matrix $A$ that
represents the matroid is also given on the input (along the independence
testing oracle). We assume, that $\mathbb{F}$ is either a finite field or the
rationals. For a detailed discussion regarding representation issues and its
computational complexity we refer the reader to~\cite[Section
3]{marx2009parameterized}.

Camerini et
al.~\cite{camerini1992random} presented a randomized, pseudopolynomial
time algorithm to find a basis $B$ of a linear matroid with
weight $w(B)$ exactly $\beta$ in $(\Delta n)^{O(1)}$ time. In our setting, this is the one-dimensional case, i.e., the case $m=1$. This method does not immediately yield an FPT algorithm in $m$ and $\Delta$, but it can be
combined with our proximity bound leading to one and improving on the running time 
of the previous method. This uses the pseudopolynomial time algorithm as a black box and
for matroids where a deterministic variant is known, e.g. for graphic matroids~\cite{barahona1987exact},
our algorithm is also deterministic.
However, for general matroids given by an independence testing oracle, it is known that
a pseudopolynomial time algorithm cannot exist~\cite{doronarad2023tight}. 

We use a variant of a standard method of aggregating all $m$ constraints into one single constraint $w(B) = β$, see, e.g.~\cite{kannan1983polynomial},
which can be done more efficiently when the search space is bounded due to proximity.
To this end, suppose there is an algorithm that, for a given matroid $M = (E, \I)$ with weights $w : E\rightarrow\{-\Delta,\dotsc,\Delta\}$ and a target $\alpha\in\mathbb N$ finds a basis $B$ of $M$ with $w(B) = \alpha$ in time $(n\Delta)^{O(1)}$.
We will show that for the same matroid $M$, with a multidimensional weight $W : E\rightarrow \{-\Delta,\dotsc,\Delta\}^m$
and a target $\Target\in \mathbb Z^m$ one can find a basis $B$ with $W(B) = \Target$ in time
$(m\Delta)^{O(m^2)} n^{O(1)}$.
Let $x^*$ be a vertex to the matroid base polytope $P_B(M)$ intersected with $\{x \in\mathbb R^E : Wx = \Target\}$.
Let $A$ be the basis with $W(A) = \Target$ that is close to $x^*$ as guaranteed by \cref{thm:proximity}.
We write $\lfloor x^* \rceil$ for the
vector derived from $x^*$ by rounding each component to the closest
integer and set
\begin{displaymath}
	    \Gamma := \|\chi(A) - \lfloor x^*\rceil \|_1 \le 2 \|\chi(A) - x^* \|_1 \le (m\Delta)^{O(m)} .
\end{displaymath}
The first inequality follows from the fact that when
$|\chi(A)_i - x^*_i| < 1/2$ then rounding will decrease the distance
in this dimension, and otherwise, it will increase it by at most $1$.
Since we do not know $A$, we also do not know the value of $\Gamma$,
but we can obtain it through guessing.

Let $B$ be any basis with $\|\chi(B) - \lfloor x^*\rceil \|_1 = \Gamma$ and define 
\begin{displaymath}
  \lambda := (1,
  (2\Gamma \Delta +1), (2\Gamma\Delta +1)^2,\dotsc,(2\Gamma \Delta+1)^{m-1})\in
    \mathbb Z^m. 
    \end{displaymath}
We argue that
$W(B) = \Target$ if and only if $\lambda\T W(B) = \lambda\T\Target$.
Since the other direction is trivial, assume that
$\lambda\T W(B) = \lambda\T\Target$.  Inductively, one can conclude
that also $W(B)_i = \Target_i$ for $i=1,2,\dotsc,m$: because all
$j < i$ satisfy this equality by induction and all $j > i$ are
multiplied by a higher power of $2\Gamma\Delta + 1$ than constraint
$i$. Hence, one has
\begin{displaymath}
  W(B)_i \equiv \Target_i \mod 2\Gamma\Delta + 1 .
\end{displaymath}
Further, $W(B)_i$ and $\Target_i = W(A)_i$ are both in
$\{W \lfloor x^* \rceil - \Gamma \Delta, \dotsc, W \lfloor x^* \rceil
+ \Gamma \Delta\}$ and therefore the modulo operator is a bijection
and it follows that $W(B)_i = \Target_i$.

It is not enough to run the pseudopolynomial time algorithm with
$\lambda\T W(B) = \lambda\T \Target$, since we also need to enforce
that $\|\chi(B) - \lfloor x^*\rceil \|_1 = \Gamma$.  Towards this, let
$w(e) = w_1(e) + (2n + 1) w_2(e)$ be the one-dimensional weight of
element $e$, where
\begin{displaymath}
  w_1(e) = \left(1-2 \lfloor x^*_e \rceil \right) \text{ and } w_2(e) = \lambda\T W(e) .
\end{displaymath}
Further, let
\begin{displaymath}
  \alpha := \left(\Gamma - \| \lfloor x^* \rceil \|_1 \right)  + (2n+1) \cdot \lambda\T \Target 
\end{displaymath}
be the target weight. Let $B$ be a basis. We argue that
$w(B) = \alpha$ if and only if
$\|\chi(A) - \lfloor x^* \rceil \|_1 = \Gamma$ and $W(B) = \Target$.
One has the following connection between $w_1(B)$ and
$\|\chi(B) - \lfloor x^* \rceil \|_1$.
\begin{align*}
  w_1(B) + \| \lfloor x^* \rceil \|_1 &= \sum_{e\in E} \chi(B)_e \cdot (1-2 \lfloor x^*_e \rceil) + \lfloor x^*_e \rceil \\ &= \sum_{e\in E} \chi(B)_e + \lfloor x^*_e \rceil - 2\chi(B)_e \lfloor x^*_e \rceil \\
	&= \| \chi(B) - \lfloor x^*_e \rceil \|_1 ,
\end{align*}
where the last inequality follows because $a+b-2ab = |a-b|$ for
$a,b\in \{0,1\}$. Thus, $\|\chi(A) - \lfloor x^* \rceil \|_1 = \Gamma$
and $W(B) = \Target$ implies $w(B) = \alpha$. For the other direction
assume that $w(B) = \alpha$. Then
    \begin{displaymath}
	    w_1(B) \equiv w(B) \equiv \beta \equiv \Gamma - \| \lfloor x^* \rceil \|_1 \mod 2n + 1 .
    \end{displaymath}
    Since both sides are in $\{-n,\dotsc,n\}$, it follows that $w_1(B) = \Gamma - \| \lfloor x^* \rceil \|_1$
    and further $\lambda\T W(B) = w_2(B) = \lambda\T \Target$, which means that $W(B) = \Target$.
    Thus, it suffices to run the pseudopolynomial time algorithm with $w$ and $\alpha$. The maximum weight given to the 
    algorithm is $n \cdot O(\Delta\Gamma)^{m} = n \cdot (m\Delta)^{O(m^2)}$, which leads to the claimed running time.

\section{From sensitivity to proximity}
\label{sec:from-sens-prox}

In this section, we show how to obtain the proximity bound in
Theorem~\ref{thm:proximity} from Theorem~\ref{thr:5}.  We assume that
$W ∈ ℤ^{m ×n}$ is a matrix with $\|W\|_∞ ≤ Δ$.  Let $A$ be a basis of
$M = (E,\I)$ and $\Target = W(A)∈ ℤ^m$. Furthermore, let $x^*∈ ℝ^n$ be
a vertex of the polytope
\begin{equation}
  \label{eq:10}
  P_B(M) ∩ \{ W x = \Target \}.
\end{equation}
The goal is to show that there exists a basis $A'$ of $M$ with weight $W(A') = \Target$ that is close to $x^*$.

The next Lemma shows something weaker, namely that there exists a
basis close to $x^*$ whose weight might violate the target
$\Target$. But since it is close, it does not violate this target by
much. The lemma essentially follows from the fact that the
characteristic vectors of two bases $χ(B_1)$ and $χ(B_2)$ of $M$ are
neighboring vertices of $P_B(M)$ if and only if
$|B_1 ∩ B_2| = \rank(M)-1$, see,
e.g.~\cite[Theorem~40.6]{schrijver2003combinatorial}.

\begin{lemma}
  \label{lem:1}
  Let $x^* ∈ P_B(M)$ and $F ⊆ P_B(M)$ be the unique face of $P_B(M)$
  of minimal affine dimension containing $x^*$. Suppose $\dim(F) =
  d$. There exists a basis $B$ of $M$ with $χ(B) ∈ F$ with
  \begin{displaymath}
    \|χ(B) - x^* \|_1 ≤ d. 
  \end{displaymath}
\end{lemma}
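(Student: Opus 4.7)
The plan is to prove \cref{lem:1} by induction on $d = \dim(F)$, using the key fact cited in the excerpt: two bases $B_1, B_2$ of $M$ are adjacent vertices of $P_B(M)$ if and only if $|B_1 \cap B_2| = \rank(M) - 1$, so an edge of $P_B(M)$ is parallel to a vector $e_i - e_j$ for some $i, j \in E$. In the base case $d = 0$, the face $F$ is a single vertex $\{\chi(B)\}$ of $P_B(M)$, and since $x^* \in F$ we get $x^* = \chi(B)$ and $\|\chi(B) - x^*\|_1 = 0$. The whole induction therefore reduces to finding a cheap way of moving from $x^*$ to a point on a strictly lower-dimensional face.

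For the inductive step, I would use the minimality of $F$: since $F$ is the minimal face containing $x^*$, the point $x^*$ lies in the relative interior of $F$. Because $\dim(F) \ge 1$, $F$ is a polytope of positive dimension and therefore contains at least one edge, which is also an edge of $P_B(M)$ and hence parallel to some direction $u = e_i - e_j$. This $u$ belongs to the affine hull of $F$, so the segment $\{x^* + t u : t \in [-t^-, t^+]\}$ stays in $\mathrm{aff}(F)$; I extend it in both directions until it hits the relative boundary of $F$, reaching points $y^+ = x^* + t^+ u$ and $y^- = x^* - t^- u$ with $t^+, t^- > 0$, each lying on a face of $F$ (and therefore of $P_B(M)$) of dimension at most $d - 1$. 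Because only coordinates $i$ and $j$ change and both lie in $[0,1]$, the identity $y^+_i - y^-_i = t^+ + t^-$ together with $y^+_i, y^-_i \in [0,1]$ gives $t^+ + t^- \le 1$. Consequently
\begin{equation*}
\|x^* - y^+\|_1 + \|x^* - y^-\|_1 = 2(t^+ + t^-) \le 2,
\end{equation*}
so, without loss of generality, $\|x^* - y^+\|_1 \le 1$. Applying the inductive hypothesis to $y^+$ (whose minimal face has dimension at most $d-1$ and is contained in $F$) yields a basis $B$ with $\chi(B) \in F$ and $\|\chi(B) - y^+\|_1 \le d - 1$; the triangle inequality then gives $\|\chi(B) - x^*\|_1 \le d$, as required.

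The only point that needs a brief justification — and what I would highlight as the main (mild) obstacle — is the existence of an edge direction of the form $e_i - e_j$ in $\mathrm{aff}(F)$. This follows because $F$ is itself a polytope of dimension $\ge 1$, hence has at least one $1$-dimensional face, and every face of $F$ is a face of $P_B(M)$; by the adjacency characterization of bases cited above, the direction of that edge has the required form. Beyond that, the argument is purely a geometric induction on the minimal face, with the edge structure of $P_B(M)$ ensuring that each dimension peeled off costs at most $1$ in the $\ell_1$ norm.
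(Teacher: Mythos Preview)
Your proof is correct and follows essentially the same inductive argument as the paper: move from $x^*$ along an edge direction $e_i - e_j$ of $P_B(M)$ contained in $\mathrm{aff}(F)$ until hitting a proper face of $F$, bound the $\ell_1$-cost of this move by $1$, and apply induction. Your handling of the bound is in fact slightly cleaner than the paper's --- you explicitly track both extents $t^+, t^-$ and use $y^+_i - y^-_i = t^+ + t^- \le 1$, whereas the paper asserts $\lambda' \le 1/2$ via a symmetry that is not fully spelled out --- but the underlying idea is identical.
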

\begin{proof}
  The proof is by induction on $\dim(F) = d$. If $d = 0$, then $x$ is
  an integer vector and therefore the characteristic vector of a
  basis.  Now, we suppose that $d≥1$. The face $F$ is defined by
  setting all inequalities of the base polytope to equality that are
  satisfied by $x^*$ with equality. The point $x^*$ lies in the
  relative interior of $F$. This means that it satisfies all other
  inequalities of $P_B(M)$ strictly.

  The face $F$ contains two different adjacent vertices $u,v ∈ F∩ ℤ^n$
  of $P_B(M)$. Their difference $u-v ∈ \{0,\pm 1\}^n$ satisfies
  $\|u-v\|_1 = 2$, see,
  e.g.~\cite[Theorem~40.6]{schrijver2003combinatorial}.  We can assume
  that $\|u - x^*\|_1\le \|v - x^*\|_1$ holds, otherwise swap $u$ and $v$.
  For $λ>0$ small enough, one has $x^* + λ(u-v) ∈ F$. Let $λ'>0$ be
  maximal with $x^* + λ'(u-v) ∈ F$. Since also $x^* - λ'(u-v) ∈ F$ and
  $F⊆ [0,1]^n$ one has $λ' ≤ 1/2$. Furthermore, the point
  $y' = x^* + λ'(u-v) $ lies on a face $F' ⊂ F$ of $P_B(M)$ of dimension
  $\dim(F') ≤ d-1$. By induction, there exists an integer point
  $z ∈ F' ∩ℤ^n$ with 
  $\|z - y'\|_1 ≤ d-1$. The triangle inequality implies that the basis $B$ with  $χ(B) = z$ satisfies the assertion. 
\end{proof}

\begin{proof}[Proof of Theorem~\ref{thm:proximity}]  
  Since $x^*$ is a vertex solution, it must lie on a $d$ dimensional
  face of $P_B(M)$ with $d\le m$.  By \cref{lem:1} there is an
  integer point $χ(B)\in P_B(M) \cap \mathbb Z^n$ with
  $\|χ(B) - x^*\|_1 \le d \le m$. 
  Thus, by
  \cref{thr:5} there exists a basis $A'$ with $W(A') = W(A)$ and $|A' ⊕ B| ≤ (2mΔ)^{12 m}  \|W(B) - W(A)\|_1$ and by the triangle inequality 
  \begin{eqnarray*}    
    \|χ(A') - x^*\|_1 & \le & \|χ(A')  - χ(B)\|_1 + \|χ(B) - x^*\|_1 \\ 
                      &\le & (2mΔ)^{12 m}  ⋅ (mΔ) + m \\
                      & ≤ & (2mΔ)^{13 m} \qedhere
  \end{eqnarray*}
\end{proof}

\section{Sensitivity}
\label{sec:sensitivity}

We now show Theorem~\ref{thr:5}, the main result of this paper. We first show the $1$-dimensional case, $m=1$. The case $m≥2$ builds on the same important structural theorem on weighted matroids that is laid out in the first part of this section but also requires some further techniques from convex and polyhedral geometry.

\subsection{One constraint}
\label{sec:sens-with-single}

Recall that 
the  elements of the ground set  have  integer weights $w： Ε → 
\{-Δ,\dots,Δ\}$. 
We first explain how Theorem~\ref{thr:5}  follows from the following assertion on weighted matroids. 

\begin{theorem}
  \label{thr:2}
  Let $A,B ∈ \I$  be disjoint with $|A| = |B|$,  and let $w： E → \{-Δ , \dots , Δ\}$ be integer weights with  $|w(A) - w(B)|≤ μ$. If $|A| = |B| \ge (2⋅ Δ+1) ^5 + μ$ then there exists $A' \in \I$, $A' ≠ A$ 
  with
  \begin{displaymath}
    w(A') = w(A)  \quad \text{ and } \quad  |A'| = |A|.
  \end{displaymath}
\end{theorem}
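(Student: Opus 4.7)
The plan is to construct $A'$ as a swap $A' = (A \setminus A_0) \cup B_0$ for carefully chosen $A_0 \subseteq A$ and $B_0 \subseteq B$ satisfying (i) $A' \in \I$, (ii) $|A_0| = |B_0|$, and (iii) $w(A_0) = w(B_0)$. Conditions (ii) and (iii) give $|A'| = |A|$ and $w(A') = w(A)$ respectively, and since $A \cap B = \emptyset$, taking $B_0 \neq \emptyset$ guarantees $A' \neq A$. The cleanest way to guarantee (iii) is to arrange that $A_0$ and $B_0$ are both \emph{unicolor} with the same per-element weight, since then $w(A_0) = |A_0| \cdot a = |B_0| \cdot a = w(B_0)$ automatically.

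I would invoke the structural result that the introduction announces as the heart of the paper (Theorem~1 / Corollary~1): for independent sets $A, B$ of equal large cardinality whose total weights are comparable, there exist large unicolor subsets $A_0 \subseteq A$ and $B_0 \subseteq B$ of equal cardinality that are mutually exchangeable in $M$. If this structural result directly yields $A_0, B_0$ with the same per-element weight then the theorem follows. Otherwise, I would combine several unicolor exchanges with distinct per-element weight pairs $(a_i, b_i) \in \{-\Delta,\ldots,\Delta\}^2$, using the downsizing lemma (\cref{lem:3}) to freely take sub-swaps of any chosen size within each exchange. Since $(a_i, b_i)$ ranges over only $(2\Delta+1)^2$ values, a pigeonhole argument applied to sufficiently many extracted exchanges produces a sub-collection whose per-element weight contributions cancel, giving a net swap of equal cardinality and zero weight change.

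The quantitative hypothesis $|A| \ge (2\Delta+1)^5 + \mu$ should track this layered pigeonhole: one factor of $2\Delta+1$ to extract a unicolor block from $A$, another to extract one of the same per-element weight from $B$, and additional factors to absorb the losses incurred when iterating the structural theorem after removing each found block and when passing to sub-swaps for cancellation. The additive $\mu$ term naturally accommodates the initial weight discrepancy $|w(A) - w(B)| \le \mu$, which must be neutralized before perfect weight preservation becomes possible. The main obstacle is clearly the structural theorem itself: ordinary matroid exchange only delivers single-element swaps, and large block swaps that preserve independence fail for non strongly base-orderable matroids, so this is precisely where the paper's novel matroid intersection argument is needed. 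Once that structural result is in hand, the proof of \cref{thr:2} reduces to pigeonhole accounting and repeated applications of \cref{lem:3}.
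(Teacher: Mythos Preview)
Your proposal has a genuine gap, and it stems from misplacing where the matroid intersection argument is needed. The structural result (\cref{co:1}) is proved by elementary means---a clever greedy partition of $B$ guided by the weight classes of $A$, plus averaging and downsizing---with no matroid intersection involved. The real obstacle comes \emph{after} you have extracted two unicolor exchanges, say $(A^+,B^+)$ with per-element difference $p\ge 0$ and $(A^-,B^-)$ with per-element difference $-q\le 0$. You now want to combine portions of both swaps so that the net weight change is zero. Your proposal is to do this by ``pigeonhole accounting and repeated applications of \cref{lem:3}'', but downsizing only lets you shrink \emph{one} exchange at a time: knowing that $(A\setminus A^+)\cup B^+\in\I$ and $(A\setminus A^-)\cup B^-\in\I$ gives no control over whether $(A\setminus(A^+_0\cup A^-_0))\cup(B^+_0\cup B^-_0)$ is independent. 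This is precisely the failure mode for non--strongly-base-orderable matroids that you yourself flag, and it occurs here, not in the proof of \cref{co:1}.

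The paper resolves this combining step via matroid intersection. After downsizing so that $|A^+|=|B^+|=2q$ and $|A^-|=|B^-|=2p$, it forms the fractional point
\[
y^*=\tfrac12\,\chi\bigl((A\setminus A^+)\cup B^+\bigr)+\tfrac12\,\chi\bigl((A\setminus A^-)\cup B^-\bigr)\in P_B(M),
\]
which satisfies $w^\top y^*=w(A)$ and, because the unicolor blocks have even sizes, has integral total mass on each weight class $E_\alpha$. Intersecting with the partition matroid that fixes these per-class counts and invoking the integrality of the matroid intersection polytope then yields an integral basis $A'\neq A$ with $w(A')=w(A)$. So the architecture is: \cref{co:1} via elementary exchange, then matroid intersection to fuse the two exchanges---the reverse of what you outlined.
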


\begin{proof}[Proof of Theorem~\ref{thr:5} for $m=1$] 
  Given a bases $A,B$, let $μ ∈ ℕ_0$ with $|w(A) - w(B)| ≤ μ$ and suppose that $A$ has smallest symmetric difference with  $B$ among all bases of weight $w(A)$. Consider $A' = A \setminus (A ∩B)$ and  $B' = B \setminus (A ∩B)$ and the minor $M' = ( A' ∪ B', \I')$, where
  \begin{displaymath}
    \I' = \{ I \setminus (A ∩B) ：I ∈ \I, \, (A ∩B) ⊆ I ⊆ (A ∪B) \}. 
  \end{displaymath}
  The rank of $M'$ is $|A\setminus B|$. If $|A \oplus B| ≥ (2 ⋅ Δ+1)^5 + μ$, then Theorem~\ref{thr:2} implies that there  exists a basis $C'$ of $M'$ different from $A'$ with weight $w(C') = w(A')$. This yields a basis
  \begin{displaymath}
    C' ∪ (A ∩ B)  \quad \text{of weight} \quad w(C' ∪ (A ∩ B)) = w(A)
  \end{displaymath}
  of the matroid $M$ that has more elements in common with $B$ than $A$. This is a contradiction to the minimality of the symmetric difference of $A$ and $B$. 
\end{proof}
\noindent 
The rest of this section is devoted to a proof of Theorem~\ref{thr:2}. We begin by showing a simple observation. 
\begin{proposition}
  \label{lem:2}
  Let $S$ be a finite set and $w ：S → \{-Δ, \dots, Δ\}$ be integer weights  with
  \begin{equation}
    \label{eq:1}
    |w(S)|  = μ. 
  \end{equation}
	Denote the set of non-negative elements by $S^+ = \{ s ∈ S ： w(s) ≥ 0\}$.  
  Let $S_1,\dots, S_ℓ$ be a partitioning of $S$ into subsets of $S$.  There exists an index $i$ such that
  \begin{equation*}
    \label{eq:2}
    |S^+ ∩S_i|  ≥ \frac{|S| -μ }{ ℓ ⋅ (Δ+1) }. 
  \end{equation*}
\end{proposition}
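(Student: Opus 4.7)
The plan is to first bound $|S^+|$ from below in terms of $|S|$, $\mu$ and $\Delta$, and then apply pigeonhole over the $\ell$ parts of the partition.

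For the first step, I would decompose $w(S) = w(S^+) + w(S^-)$, where $S^- = S \setminus S^+$ is the set of elements with strictly negative weight. Since weights are integer, every $s \in S^-$ satisfies $w(s) \le -1$, while every $s \in S^+$ satisfies $w(s) \le \Delta$. This yields the upper bound
\begin{equation*}
  w(S) \;\le\; \Delta \cdot |S^+| \,-\, |S^-| \;=\; (\Delta+1)\, |S^+| \,-\, |S|.
\end{equation*}
On the other hand, the hypothesis $|w(S)| = \mu$ gives $w(S) \ge -\mu$, so rearranging produces
\begin{equation*}
  |S^+| \;\ge\; \frac{|S| - \mu}{\Delta + 1}.
\end{equation*}

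For the second step, note that $\{S^+ \cap S_i\}_{i=1}^{\ell}$ partitions $S^+$, since $S_1,\dots,S_\ell$ partitions $S$. By pigeonhole, some index $i$ satisfies $|S^+ \cap S_i| \ge |S^+|/\ell$, which combined with the previous bound gives exactly the claim
\begin{equation*}
  |S^+ \cap S_i| \;\ge\; \frac{|S| - \mu}{\ell \,(\Delta + 1)}.
\end{equation*}

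There is no real obstacle here; the whole statement is a one-line double counting followed by pigeonhole. The only thing to be careful about is that the inequality $w(S) \ge -\mu$ (rather than just $w(S) \le \mu$) is the useful direction, and that the integrality of $w$ is used in asserting $w(s) \le -1$ for $s \in S^-$, which is what produces the $+1$ in the denominator $\Delta+1$.
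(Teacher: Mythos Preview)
Your proof is correct and follows exactly the same approach as the paper's: first bound $|S^+| \ge (|S|-\mu)/(\Delta+1)$ using the hypothesis $|w(S)| = \mu$, then apply pigeonhole over the $\ell$ parts. You simply spell out the derivation of the first inequality in more detail than the paper, which just asserts it.
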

\begin{proof}
  Equation~\eqref{eq:1} implies
  \begin{displaymath}
    |S^+| ≥ (|S| - μ) / (Δ+1). 
  \end{displaymath}
By an averaging argument, there exists a set $S_i$ that contains at least $(|S| - μ) / (ℓ ⋅(Δ+1))$ many elements of $S^+$.   
\end{proof}

\begin{lemma}
  \label{thr:1}
  Let $A,B ∈ \I$  be disjoint with $|A| = |B| = k$ and let $w： E → \{-Δ , \dots , Δ\}$ be integer weights with  $|w(A) - w(B)|≤ μ$. There exist  subsets $A' ⊆ A$ and $B' ⊆ B$ of equal cardinality such that
  \begin{enumerate}[i)] 
  \item $A \setminus A' ∪ B' ∈ \I$, 
  \item $|A'| = |B'| ≥ (k - μ) / (2⋅ Δ+1)^2$  and 
  \item 
    $w(a)≥w(b)$ for each $a ∈ A'$ and $b ∈ B'$. 
  \end{enumerate} 
\end{lemma}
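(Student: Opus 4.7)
The plan is to combine the classical matroid exchange bijection with the averaging argument of \cref{lem:2}. After possibly extending $A$ and $B$ to bases of the restriction $M|_{A \cup B}$, a bijective exchange theorem (Brualdi) yields a bijection $\phi: B \to A$ such that $(A \setminus \{\phi(b)\}) \cup \{b\} \in \I$ for every $b \in B$. This reduces the task of finding a monotone exchange to selecting a large subfamily of the matched pairs $(b, \phi(b))$ with $w(\phi(b)) \ge w(b)$.

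To locate such a subfamily, define on $B$ the auxiliary weight $w'(b) := w(\phi(b)) - w(b)$, whose range lies in $\{-2\Delta, \ldots, 2\Delta\}$ and whose total magnitude equals $|w(A) - w(B)| \le \mu$. Partition $B$ into the $2\Delta+1$ classes $B_\beta = \{b \in B : w(b) = \beta\}$ and apply \cref{lem:2} with $2\Delta$ in place of $\Delta$ (its proof works verbatim for any bound on $\|w\|_\infty$); this produces a class $B_\beta$ containing at least $(k-\mu)/((2\Delta+1)(2\Delta+1)) = (k-\mu)/(2\Delta+1)^2$ elements $b$ with $w'(b) \ge 0$, i.e., $w(\phi(b)) \ge \beta$. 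Setting $B' := \{b \in B_\beta : w(\phi(b)) \ge \beta\}$ and $A' := \phi(B')$ then immediately gives condition (ii) on cardinality as well as condition (iii), since for every $a \in A'$ and $b \in B'$ one has $w(a) \ge \beta = w(b)$.

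The main obstacle is condition (i), namely $(A \setminus A') \cup B' \in \I$. Brualdi's bijection only certifies single-element swaps, and in a general (non strongly base-orderable) matroid the simultaneous swap along the induced matching need not preserve independence: by a Krogdahl-type characterization, the swap is valid only when the bipartite exchange graph restricted to $A' \cup B'$ admits a unique perfect matching. To handle this, I plan to invoke the Downsizing lemma (\cref{lem:3}) with $I = A$: since $B \in \I$, the precondition $(A \setminus A) \cup B = B \in \I$ holds, so for every $A^* \subseteq A$ there exists some $B^* \subseteq B$ of matching cardinality with $(A \setminus A^*) \cup B^* \in \I$. The remaining subtlety, and where I expect the novel structural content of the lemma to enter, is to ensure that the $B^*$ produced by Downsizing is compatible with the weight threshold $\beta$. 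I would attempt this either by exploiting the freedom in the greedy extension inside the contraction $M/B'|_A$ to force the complementary set to lie among the high-weight elements of $A$, or by choosing $\phi$ from the outset to be weight-greedy so that its restriction to $A' \cup B'$ is forced to be the unique perfect matching. Reconciling the pigeonhole step with the simultaneous-exchange requirement is the core technical hurdle I anticipate.
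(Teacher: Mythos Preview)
You have correctly located the obstruction: Brualdi's bijection certifies only single-element exchanges, and there is no reason the simultaneous swap $(A\setminus\phi(B'))\cup B'$ should be independent in a general matroid. Neither proposed fix closes this gap. Invoking \cref{lem:3} with the full pair $(A,B)$ produces, for any chosen $A^*\subseteq A$, some $B^*$ with $(A\setminus A^*)\cup B^*\in\I$, but you lose all control over the weights in $B^*$. And your second fix---greedily extending $B'$ inside $A$ so that the leftover $A'$ consists of high-weight elements---amounts to requiring that $B'\cup\{a\in A:w(a)<\beta\}$ is already independent; a $B'$ obtained by restricting Brualdi's bijection to a weight class of $B$ has no reason to satisfy this.

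The paper resolves the issue by \emph{abandoning Brualdi's bijection} and building the pairing so that the simultaneous-swap property is present from the outset. One partitions $A$ into its weight classes $A_{-\Delta},\dots,A_{\Delta}$, and then iteratively (via the exchange axiom, peeling off one weight class of $A$ at a time) constructs a partition $B=B_{-\Delta}\cup\dots\cup B_{\Delta}$ with $|B_j|=|A_j|$ and the key structural property that, for every $j$, the whole block $B_j$ together with an entire tail of $A$ (all elements with weight on one side of $j$) is independent. The pairing is then class-by-class, $A_j\leftrightarrow B_j$, and \cref{lem:2} is applied exactly as you do. The selected $B'$ lies inside a single block $B_j$, so $B'\cup(\text{tail of }A)\in\I$ holds \emph{by construction}; extending to cardinality $k$ inside $A$ forces the removed set $A'$ into the complementary tail of $A$, which is precisely what yields condition~(iii). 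In other words, the freedom you hoped to exploit in the greedy extension \emph{after} the pigeonhole step is instead engineered into the partition of $B$ \emph{before} that step is taken---this is the missing idea.
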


\begin{proof}
  Let $a_1,\dots,a_k$ be an ordering of $A$ such that $w(a_i) ≥ w(a_{i+1})$ for all $i$. Furthermore, let $A_\alpha = \{ a ∈ A： w(a) = \alpha\}$ for $\alpha=-Δ,\dots,Δ$. The $A_\alpha$ are  a partitioning of $A = A_{-Δ}  ∪ \cdots ∪ A_{Δ}$.  
  From this we construct a partitioning  $B_{-Δ}  ∪ \cdots ∪ B_{Δ}$ of $B$   such that, for each $j ∈ \{-Δ,\dots, Δ\}$ one has
  \begin{displaymath}
    B_j ∪ A_{j+1} ∪ \cdots ∪ A_Δ ∈ \I. 
  \end{displaymath}
  The existence of such a partitioning $B_{-Δ} ∪ \cdots ∪ B_{Δ}$ of
  $B$ is guaranteed by the exchange property of the matroid, specifically 
   to the sets
  \begin{equation}
    \label{eq:3}
    A_{j+1} ∪ \cdots ∪ A_Δ ∈ \I  \text{ and } B \setminus (B_{-\Delta} ∪ \cdots ∪ B_{j-1}) ∈ \I.
  \end{equation}  
  If $B_{-\Delta},\dots,B_{j-1}$ have been constructed, then $B_j ⊆ B$ is a
  subset of cardinality $|A_j|$ of the right independent set 
  in~\eqref{eq:3} that can be added to $A_{j+1} ∪ \cdots ∪ A_Δ$.  

   Now let $b_1,\dots,b_k$ be any ordering of $B$ such that the elements of
   $ B_{j}$ come  before the elements of $B_{j+1}$ for each
   $j$. 
   In the following, we refer to a tuple $a_ib_i$ as an \emph{edge}. 
  We apply Proposition~\ref{lem:2} to the set of edges  $S = \{ a_ib_i ： i=1,\dots,k\}$ and the weight function $w'： S → \{-2⋅Δ , \dots , 2⋅Δ\}$ defined by the difference of weights
  \begin{displaymath}
    w'(a_ib_i) = w(a_i) - w(b_i).
  \end{displaymath}
  The partitioning of $S$ is according to the value of the $a_i$. Formally,
  \begin{displaymath}
    S = S_{-Δ}  ∪ \cdots ∪ S_Δ, \quad \text{  where } \quad S_\alpha = \{a_ib_i ： w(a_i) = \alpha\}.
  \end{displaymath}
  Proposition~\ref{lem:2} now shows that there exists an index $j ∈ \{-Δ,\dots, Δ\}$ such that $S_j$ contains at least
  \begin{displaymath}
    \frac{k - μ  }{(2Δ+1)^2} 
  \end{displaymath}
	 non-negative edges $a_ib_i$.  Let $B' ⊆ B_j$ be the corresponding end-nodes
     of these edges on the side of $B$. The following is an independent set 
  \begin{displaymath}
    B' ∪ A_{j+1} ∪ \cdots ∪ A_{Δ},
  \end{displaymath}
  simply because $B_j ∪ A_{j+1} ∪ \cdots ∪ A_{Δ} ∈ \I$.
  Since $A$ is independent, there exists a subset $\wt{A} ⊆ A_{-Δ} ∪ \cdots ∪ A_{j }$ such that
  \begin{displaymath}
    \wt{A} ∪ B' ∪ A_{j+1} ∪ \cdots ∪ A_{Δ} 
  \end{displaymath}
  is an independent set of cardinality $|A| = k$. Let
  \begin{displaymath}
    A ' = A \setminus  (\wt{A}  ∪ A_{j+1} ∪ \cdots ∪ A_{Δ}). 
  \end{displaymath}
  Clearly, we have $|A'|  = |B'| ≥ ({k - μ  })/(2Δ+1)^2$  
  and
  \begin{displaymath}
    (A \setminus A') ∪ B' ∈ \I. 
  \end{displaymath}
  The crucial observation is that all elements in $A'$ have weight at least $j$ and all weights in $B'$ have weights at most $j$.
\end{proof}

A subset $S ⊆ E$ of the ground set is called \emph{unicolor}, if $w(x) = w(y)$ for each $x,y ∈ S$.  
The following is a version of \cref{thr:1} guaranteeing an exchange with unicolor sets. 
 \begin{corollary}
   \label{co:1}
   Let $A,B ∈ \I$  be disjoint with $|A| = |B| = k$ and let $w： E → \{-Δ , \dots , Δ\}$ be integer weights with  $|w(A) - w(B)|≤ μ$. There are unicolor subsets $A' ⊆ A$ and $B' ⊆ B$ of equal cardinality such that
   \begin{enumerate}[i)] 
   \item $A \setminus A' ∪ B' ∈ \I$,
   \item $|A'| = |B'| ≥ (k - μ) / (2⋅ Δ+1)^4$,
   \item $w(a)≥w(b)$ for each $a ∈ A'$ and $b ∈ B'$.
   \end{enumerate} 
 \end{corollary}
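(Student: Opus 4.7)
My plan is to derive \cref{co:1} from \cref{thr:1} by two rounds of pigeonhole bucketing into unicolor subsets, each round followed by an application of the Downsizing \cref{lem:3} to maintain the exchangeability property. The losses in each pigeonhole step are a factor $(2\Delta+1)$ in size (since weights live in a set of $2\Delta+1$ values), and together with the factor $(2\Delta+1)^2$ already present in \cref{thr:1}, this will yield the advertised $(2\Delta+1)^4$ denominator.

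Concretely, I first invoke \cref{thr:1} on $A,B$ to obtain subsets $A_1 \subseteq A$, $B_1 \subseteq B$ with $|A_1|=|B_1|\ge (k-\mu)/(2\Delta+1)^2$, such that $(A\setminus A_1)\cup B_1 \in \I$ and every element of $A_1$ is at least as heavy as every element of $B_1$. Next, partition $A_1$ by weight value; one of the at most $2\Delta+1$ color classes must contain a unicolor subset $A_1^u \subseteq A_1$ of size at least $|A_1|/(2\Delta+1)$. Applying \cref{lem:3} with $I=A$, chosen subset $A_1^u$, yields $B_1^* \subseteq B_1$ of equal cardinality with $(A\setminus A_1^u)\cup B_1^* \in \I$; however, $B_1^*$ is not yet unicolor. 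So I partition $B_1^*$ by weight and pick a unicolor subset $B' \subseteq B_1^*$ of size at least $|B_1^*|/(2\Delta+1)$. Then I apply the second (symmetric) statement of \cref{lem:3} to the pair $(A_1^u, B_1^*)$ to obtain $A' \subseteq A_1^u$ of the same size as $B'$ with $(A\setminus A')\cup B' \in \I$.

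The set $A'$ is unicolor because $A_1^u$ is, and $B'$ is unicolor by construction. The weight inequality $w(a)\ge w(b)$ for $a\in A'$, $b\in B'$ is inherited from $A_1 \supseteq A'$ and $B_1 \supseteq B_1^* \supseteq B'$ via \cref{thr:1}. The cardinality satisfies
\begin{equation*}
|A'| = |B'| \ge \frac{|B_1^*|}{2\Delta+1} = \frac{|A_1^u|}{2\Delta+1} \ge \frac{|A_1|}{(2\Delta+1)^2} \ge \frac{k-\mu}{(2\Delta+1)^4},
\end{equation*}
as required.

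The only step that requires any care is checking that the symmetric form of the Downsizing \cref{lem:3} is applicable in the second round: we need $A_1^u \subseteq A$ (clear since $A_1^u \subseteq A_1 \subseteq A$), $B_1^* \cap A = \emptyset$ (clear from $B_1^* \subseteq B_1 \subseteq B$ and $A\cap B=\emptyset$), $|A_1^u|=|B_1^*|$, and $(A\setminus A_1^u)\cup B_1^* \in \I$, all of which are ensured by the first Downsizing application. Everything else is routine pigeonhole and bookkeeping; I do not foresee any genuine obstacle.
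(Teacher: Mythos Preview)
Your proof is correct and follows essentially the same approach as the paper: apply \cref{thr:1}, then perform two rounds of pigeonhole-plus-downsizing (\cref{lem:3}) to make both sides unicolor, each round costing a factor $2\Delta+1$ in cardinality. The only cosmetic difference is the order in which you make the two sides unicolor (you do the $A$-side first, the paper the $B$-side), which is immaterial.
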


 \begin{proof}
	 Let $A^{(1)} ⊆ A$ and $B^{(1)} ⊆ B$ be the sets obtained from
	 \cref{thr:1}, which are not unicolor, but satisfy all other required properties.
	 Further, $|A^{(1)}| = |B^{(1)}| \ge (k - \mu) / (2\Delta+1)^2$.
	 At least $|B^{(1)}| / (2⋅Δ+1)$ elements of  $B^{(1)}$ have the same weight.
	 By Lemma~\ref{lem:3} we can thus restrict to this unicolor subset $B^{(2)}\subseteq B^{(1)}$ with a corresponding subset $A^{(2)}\subseteq A^{(1)}$ guaranteeing
  \begin{displaymath}
	  |A^{(2)}|  = |B^{(2)}| ≥ ({k - μ  })/(2Δ+1)^3.
   \end{displaymath}
	 Again, by restricting to the largest unicolor subset of $A^{(2)}$ we have $A^{(3)}\subseteq A^{(2)}$ and $B^{(3)}\subseteq B^{(2)}$ both unicolor and
    \begin{displaymath}
	    |A^{(3)}|  = |B^{(3)}| ≥ ({k - μ  })/(2Δ+1)^4. \qedhere
   \end{displaymath}
 \end{proof}

\begin{proof}[Proof of Theorem~\ref{thr:2}] 
  \cref{co:1} implies that there exist two unicolor sets
  $A^+ ⊆A$ and $B^+ ⊆ B$ of equal cardinality at least $2 ⋅ Δ$ such
  that
  \begin{itemize}
  \item $|A^+| = |B^+ |$, 
  \item $w(a) ≥ w(b)$ for each $a ∈A^+$ and $b ∈ B^+$, and
  \item $(A  ⧹ A^+ ) ∪ B^+ ∈ \I$. 
  \end{itemize}
By symmetry, there also exist unicolor sets $A^- ⊆A$ and $B^- ⊆ B$ of equal cardinality at least $ 2⋅ Δ$ such
  that
  \begin{itemize}
  \item $|A^-| = |B^- |$, 
  \item $w(a) ≤  w(b)$ for each $a ∈A^+$ and $b ∈ B^+$, and
  \item $(A  ⧹ A^- ) ∪ B^- ∈ \I$. 
  \end{itemize}
 Let $p  = w(a^+) - w(b^+) ∈ \{0,\dots,Δ\}$ be the weight of the edges  $a^+b^+$, $a^+ ∈ A^+$, $b^+ ∈ B^+$.
	Similarly, let $q ∈ \{0,\dots,Δ\}$ such that $-q$ is the common weight of the edges $a^-b^-$, $a^- ∈ A^-$, $b^- ∈ B^-$.
 If $p = 0$ then one has
  \begin{displaymath}
    w \left( (A  ⧹ A^+ ) ∪ B^+ \right) = w(A). 
  \end{displaymath}
  This is an independent set of cardinality $|A|$  and weight $w(A)$, which is what we want. 
  Similarly, when $q = 0$ then the assertion follows trivially. Hence, assume for the remainder of the proof that
	$p, q \ge 1$.

	Since all four sets are of cardinality at least $2 ⋅Δ$, we can assume, by downsizing (\cref{lem:3}) if necessary, that
  \begin{displaymath}
    |A^+| = |B^+| = 2q  \, \text{ and }   |A^-| = |B^-| = 2p . 
  \end{displaymath}
	We next consider the fractional point $y^*$
  \begin{eqnarray*}    
    y^*  & = &  \frac{1}{2} χ\left(  (A  ⧹ A^+ ) ∪ B^+\right) +  \frac{1}{2} χ\left(  (A  ⧹ A^- ) ∪ B^-\right) \label{eq:5} \\ 
	  & = &   χ(A) -  \frac{1}{2}  \left( χ(A^+) - χ(B^+) + χ(A^-) - χ(B^-) \right). \label{eq:4}        
  \end{eqnarray*}
Clearly, $y^* ∈ P_B(M)$ and the weight of $y^*$ is the same as the weight of $A$, i.e., 
  \begin{eqnarray*}    
    w\T y^* & = & w(A) - \frac{1}{2} (-p) ⋅2 ⋅ q - \frac{1}{2} q ⋅2 ⋅ p \\
           & = & w(A). 
  \end{eqnarray*}
  Since $A^+, B^+, A^-, B^- \neq \emptyset$ one has $y^* ≠ χ(A)$. 
  For $\alpha ∈ \{ -Δ, \dots , Δ\}$, let us denote the elements of weight $\alpha$ by
  $E_\alpha = \{ e ∈ E ： w(e) = \alpha\}$. We next argue that the sum of the
  components of $y^*$ corresponding to $E_\alpha$ are integral for each $\alpha
  ∈ \{-\Delta,\dots,Δ\}$. This follows from the fact that the sets $A^+,B^+,A^-$ and $B^-$ are unicolor and of even cardinality, implying that
  \begin{displaymath}
    χ(E_\alpha)\T \left( χ(A^+) - χ(B^+) + χ(A^-) - χ(B^-) \right) 
  \end{displaymath}
  is an even integer. Consequently one has for each $\alpha ∈ \{-Δ,\dots,Δ\}$
  \begin{displaymath}
    χ(E_\alpha)\T y^* ∈ ℕ_0. 
  \end{displaymath}
  We next consider the \emph{partition matroid} $M_p = (E, \I_p)$ with
  \begin{displaymath}
    \I_p = \left\{ I ⊆ E ： |I ∩ E_\alpha| ≤ χ(E_\alpha)\T y^*, \, \alpha ∈ \{-Δ,\dots,Δ\}      \right\}.
  \end{displaymath}
  The corresponding matroid polytope $P(M_p)$ is defined by the inequalities
  \begin{equation}
    \label{eq:6}
    \begin{array}{rcll}
      ∑_{e ∈ E_\alpha} x_e & ≤ &  ∑_{e ∈ E_\alpha}  y_e^*, & \alpha ∈ \{-Δ , \dots , Δ\} \\
                 x_e & ≥ & 0, &  e ∈ E.                  
    \end{array}
  \end{equation}
  The point $y^*$ satisfies all rank-constraints in~\eqref{eq:6} with equality.
  The crucial observation is now the following.
  \begin{quote}
    Each $y ∈ P_B(M) ∩ P_B(M_p)$ that satisfies the rank constraints in \eqref{eq:6} with equality is of weight $w(y) = w(A)$.  
  \end{quote}
  The \emph{matroid intersection polytope} $P_B(M) ∩ P_B(M_p)$ is integral~\cite{edmonds1979matroid,lawler1975matroid}, see also~\cite[Theorem~41.12]{schrijver2003combinatorial}. The fractional point $y^*$ is therefore in the convex hull of at least two integral vectors that are also tight at the rank inequalities in~\eqref{eq:6}. One of them corresponds to the characteristic vector $χ(A')$  of an independent set $A'$ different from $A$. The cardinality of this independent set is equal to the one of $A$, since the sum of the right-hand-sides of the rank constraints in~\eqref{eq:6}  is equal to $|A|$ and $χ(A')$ satisfies all these constraints with equality. 
\end{proof}

\subsection{Several constraints}
We start by stating two lemmas on cones generated by small discrete vectors.
Recall that the convex cone generated by a set of vectors $X \in \mathbb R^m$ is defined as
\begin{displaymath}
  \mathrm{cone}(X) = \left\{\sum_{x\in X} \lambda_x x : \lambda_x \ge 0,\ x\in X\right\} ⊆ ℝ^m.  
\end{displaymath}
A cone is \emph{pointed} if $\mathbf 0$ is a  vertex of the cone and \emph{flat} otherwise. 
\begin{lemma}\label{lem:hadamard}
  Let $X\subseteq \{-\Delta,\dotsc,\Delta\}^m$ such that
  $C = \mathrm{cone}(X)$ is pointed.  Then there is a halfspace
  $H = \{x\in\mathbb R^m : d\T x \ge 0\}$ with
  $C \cap H = \{\mathbf 0\}$ defined by some $d \in \mathbb Z^m$ with
	\begin{equation*}
		\lVert d \rVert_{\infty} \le \Delta^{m} m^{m/2+1} .
	\end{equation*}
\end{lemma}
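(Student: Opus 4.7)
The plan is to find $d$ as (a scalar multiple of) a vertex of the polyhedron $Q = \{d \in \mathbb R^m : d\T x \le -1 \text{ for all } x \in X\}$ and then bound its entries via Hadamard's inequality. Pointedness of $C$ is equivalent to $0 \notin \conv(X \setminus \{0\})$, so the separating hyperplane theorem guarantees $Q$ is non-empty; we may also assume $0 \notin X$, as removing $0$ does not affect $C$.

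We first handle the main case, where $X$ spans $\mathbb R^m$. The recession cone of $Q$ is then the polar cone $C^\ast$, whose lineality space $\operatorname{span}(X)^\perp$ is trivial, so $Q$ has a vertex $d^\ast$ defined by $m$ tight constraints $d^{\ast\T} x_{i_j} = -1$ with $x_{i_1}, \dots, x_{i_m}$ linearly independent. Writing these as rows of an $m\times m$ integer matrix $A$ with entries of absolute value at most $\Delta$, one has $A d^\ast = -\mathbf 1$, and scaling yields the integer vector $d := -\operatorname{sgn}(\det A) \cdot \operatorname{adj}(A)\mathbf 1 = |\det A| \cdot d^\ast$. By construction $d\T x_{i_j} = -|\det A| \le -1$ on the tight rows, and since $d$ is a positive multiple of $d^\ast$, one has $d\T x < 0$ for all other $x \in X$ as well. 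Each entry of $\operatorname{adj}(A)\mathbf 1$ is a signed sum of $m$ minors of $A$ of size $m-1$, each of which Hadamard's inequality bounds by $\Delta^{m-1}(m-1)^{(m-1)/2}$; this gives $\|d\|_\infty \le m\,\Delta^{m-1}(m-1)^{(m-1)/2} \le \Delta^m m^{m/2+1}$.

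For the general case, we reduce to the spanning case by coordinate restriction. Let $V = \operatorname{span}(X)$, $r = \dim V$, and fix a basis $Y \subseteq X$ of $V$ arranged as an $r \times m$ integer matrix of rank $r$. Pick a column set $J \subseteq \{1,\dots,m\}$ of size $r$ with $Y_J$ nonsingular, which exists since $Y$ has rank $r$. The restriction $V \to \mathbb R^r$, $x \mapsto x_J$, is then a linear bijection, so $\{x_J : x \in X\} \subseteq \{-\Delta,\dots,\Delta\}^r$ spans $\mathbb R^r$, and $\cone(\{x_J : x \in X\})$ is pointed, since any nontrivial conic relation among the $x_J$'s pulls back through the bijection to one among the $x$'s, contradicting pointedness of $C$. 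Applying the main case in $\mathbb R^r$ yields $d_J \in \mathbb Z^r$ with $d_J\T x_J < 0$ on $X$ and $\|d_J\|_\infty \le \Delta^r r^{r/2+1}$; padding with zeros on $\overline J$ produces $d \in \mathbb Z^m$ with $d\T x = d_J\T x_J < 0$ on $X$ and $\|d\|_\infty \le \Delta^m m^{m/2+1}$, as required.

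The main obstacle is this non-spanning case, where $Q$ has no vertex because its lineality space is nontrivial. The coordinate-restriction trick reduces cleanly to the main case while preserving both the $\Delta$-bound on entries and the pointedness of the cone; the naive alternative of augmenting $X$ with integer vectors from $V^\perp$ would introduce subdeterminants of $Y$ into the entry sizes and badly inflate the final bound.
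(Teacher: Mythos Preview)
Your proof is correct and takes a somewhat different route from the paper's. The paper constructs $d$ as the sum of the facet normals of a simplicial cone containing $C$: each facet normal is orthogonal to $\dim(C)-1$ extreme rays of $C$ (which lie in $X$), so by Cramer's rule and Hadamard each normal is an integer vector with entries bounded by $\Delta^{m-1}(m-1)^{(m-1)/2}$, and summing at most $m$ of them gives the stated bound. You instead locate $d$ as a scaled vertex of the dual polyhedron $Q=\{d:d\T x\le -1\ \forall x\in X\}$, obtaining $d=\pm\operatorname{adj}(A)\mathbf 1$ for a matrix $A$ with $m$ rows from $X$. The two constructions are closely related---the columns of $\operatorname{adj}(A)$ are precisely the vectors orthogonal to $m-1$ of the chosen rows, so your $d$ is also a signed sum of $m$ ``facet-normal-like'' vectors---and the Hadamard step is identical. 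What your approach buys is a cleaner treatment of the case $\dim\operatorname{span}(X)<m$: your coordinate-restriction argument is fully explicit, whereas the paper's claim that the facet normal is ``up to scaling fully defined'' by $\dim(C)-1$ vectors in $X$ glosses over the ambiguity coming from $\operatorname{span}(X)^\perp$. Conversely, the paper's argument is more geometric and avoids having to argue that $Q$ is nonempty and has a vertex.
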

\begin{proof}
  One may assume without loss of generality that $C$ has exactly
  $\dim(C) \le m$ facets.  Otherwise, the origin is a vertex of
  another cone containing $C$ that is defined by $\dim(C)$ facets of
  $C$ with linearly independent normals.  The vector $d$ can then be taken as
  the sum of the facet normals of $C$, each of which is up to scaling
  fully defined by being orthogonal to the $\dim(C)-1$ vectors in $X$
  spanning the facet.  By Cramer's rule and Hadamard's bound, for each
  face normal one can take an integer vector with entries bounded by
  $\Delta^m m^{m/2}$.
\end{proof}
\begin{lemma}\label{lem:caratheodory}
  Let $X\subseteq \{-\Delta,\dotsc,\Delta\}^m$ and $x\in X$ such that
  \begin{equation*}
    -x \in \mathrm{cone}(X\setminus\{x\})
    .
  \end{equation*}
  Then we can write
  $- \lambda_x x = \sum_{y\in X\setminus\{x\}} \lambda_{y} y$ for some
  $\lambda\in \mathbb Z_{\ge 0}^X$ with at most $m+1$ non-zero
  components and
	\begin{equation*}
	\| \lambda \|_\infty \le \Delta^m m^{m/2} .
	\end{equation*}
\end{lemma}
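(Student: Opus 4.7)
The plan is to combine Carathéodory's theorem for cones with Cramer's rule and Hadamard's inequality, mirroring the proof of \cref{lem:hadamard}.

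First I would invoke Carathéodory's theorem on $\mathrm{cone}(X \setminus \{x\}) \subseteq \mathbb R^m$ to obtain an expression
\begin{equation*}
-x = \sum_{i=1}^{k} \mu_i y_i,\qquad \mu_i > 0,\; y_i \in X\setminus\{x\},
\end{equation*}
where $k \le m$ and $y_1,\dots,y_k$ are linearly independent. Moving $x$ to the right gives an equation of the form $\sum_{i=1}^{k} \mu_i y_i + 1 \cdot x = 0$ involving at most $m+1$ vectors, which already delivers the support bound once I have rescaled to integers.

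Next I would turn the rational coefficients $\mu_i$ into integers. Let $M \in \mathbb Z^{m \times k}$ be the matrix with columns $y_1,\dots,y_k$; since its columns are linearly independent, $M$ contains a nonsingular $k \times k$ submatrix $M'$. Let $x' \in \mathbb Z^k$ be the restriction of $x$ to the corresponding rows. Then $\mu = -(M')^{-1} x'$, and Cramer's rule writes each $\mu_i$ as $\mu_i = \det(M'_i)/\det(M')$, where $M'_i$ is $M'$ with its $i$-th column replaced by $-x'$. Setting
\begin{equation*}
\lambda_x := |\det(M')|,\qquad \lambda_{y_i} := \mu_i \cdot |\det(M')| = |\det(M'_i)|,
\end{equation*}
and $\lambda_y := 0$ for the remaining $y \in X\setminus\{x,y_1,\dots,y_k\}$ produces a nonnegative \emph{integer} vector $\lambda$ with at most $m+1$ nonzero components satisfying $\lambda_x \cdot (-x) = \sum_i \lambda_{y_i} y_i$; nonnegativity of $\lambda_{y_i}$ follows from the positivity of $\mu_i$ together with the matching sign of $\det(M')$ and $\det(M'_i)$.

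Finally, the size bound follows from Hadamard's inequality: every entry of $M'$ and of $M'_i$ lies in $\{-\Delta,\dots,\Delta\}$, so each of the $k$ columns has Euclidean norm at most $\sqrt{k}\,\Delta$, yielding $|\det(M')|,\, |\det(M'_i)| \le (\sqrt{k}\,\Delta)^{k} \le m^{m/2}\Delta^{m}$ (with $k \le m$). This immediately gives $\|\lambda\|_\infty \le \Delta^{m} m^{m/2}$, concluding the proof. There is no serious obstacle here; the only point to be careful about is choosing the rows of $M$ that form a nonsingular $k\times k$ block and verifying that the sign of the Cramer quotients is consistent with the $\mu_i > 0$ coming from Carathéodory, which is automatic.
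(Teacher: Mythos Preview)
Your proposal is correct and follows essentially the same route as the paper's proof: reduce to at most $m$ linearly independent generators via Carath\'eodory's theorem, then clear denominators using Cramer's rule and bound the resulting integer coefficients by Hadamard's inequality. The paper's proof is a two-line sketch of exactly this argument, and you have simply spelled out the details (choosing a nonsingular $k\times k$ block, verifying signs, and checking $(\sqrt{k}\,\Delta)^k \le m^{m/2}\Delta^m$).
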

\begin{proof}
  By a variant of Carath\'eodory's Theorem, see~\cite[Theorem
  5.2]{schrijver2003combinatorial}, we may assume without loss of
  generality that $X\setminus \{x\}$ are linearly independent, in
  particular, $|X|\le m + 1$.  The assertion follows immediately from
  applying Cramer's rule and Hadamard's bound.
\end{proof}

\begin{figure}
	\centering
	\begin{tikzpicture}

    \pgfmathsetmacro{\phi}{(1+sqrt(5))/2}
		\draw[thick] (0, 1.25, 1.25) -- (0, 1.25, -1.25) -- (0, -1.25, -1.25) -- (0, -1.25, 1.25) -- cycle;
		\draw[->] (0, 0, 0) -- node[pos=1, right] {$\color{black}\cdots$} (1.5, -1, -1);
		\fill[lightgray] (0, 0, 0) -- (1.35, 0.9 * \phi, 0) -- (1.35, -0.9, -0.9) -- (1.35, -0.9, 0.9) -- cycle;
		\draw[->] (0, 0, 0) -- node[pos=1, right] {$\color{black}\delta_0$} (1.5, \phi, 0);
		\draw[gray] (1.35, -0.9, -0.9) -- (1.35, -0.9, 0.9) -- (1.35, 0.9 * \phi, 0) -- cycle;
		\draw[->] (0, 0, 0) -- node[pos=1, right] {$\color{black}\delta_{i}$} (1.5, -1, 1);
		\draw[->] (0, 0, 0) -- node[pos=1, left] {$\color{black}\delta_{i+1}$} (-0.5, -1, 0.5);
		\draw[->] (0, 0, 0) -- node[pos=1, left] {$d$} (-2, 0, 0);
		\node at (-0.5, 1.5, 0) {$H$};
	\end{tikzpicture}
	\caption{Visualization of the proof of Lemma~\ref{lem:several-exchanges}}
\end{figure}
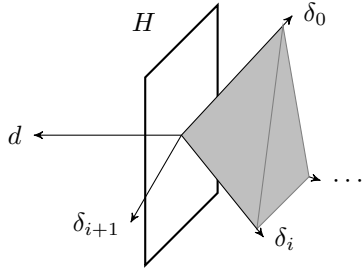

\noindent 
In analogy to the one-dimensional case, we call a set $S\subseteq E$ \emph{unicolor}, if $W(a) = W(b)$ for all $a,b\in S$. The following is a
multidimensional analogue of \cref{thr:1}.
\begin{lemma}\label{lem:m-monotone}
  Let $A, B\in \I$ be disjoint with $|A| = |B| = k$ and let
  $W: E\rightarrow \{-\Delta,\dotsc,\Delta\}^m$ be weight vectors with
  $\lVert W(A) - W(B) \rVert_1 \le \mu$.  Further, let
  $d\in\mathbb Z^m$. There exist unicolor sets $A'\subseteq A$,
  $B'\subseteq B$ of equal cardinality such that
\begin{enumerate}[i)] 
\item $A\setminus A' \cup B'\in \I$,
\item
  $|A'| = |B'| \ge \frac{k - \| d \|_1 \mu}{(2 \|d \|_1 Δ +
    1)^2  (2\Delta+1)^{2m}}$, and
\item $d\T W(a) \ge d\T W(b)$ for each $a\in A'$ and $b\in B'$.
\end{enumerate}
\end{lemma}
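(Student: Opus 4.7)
The plan is to reduce to the one-dimensional \cref{thr:1} by projecting the weights onto the direction $d$, and then to refine the resulting subsets into $W$-unicolor sets using two applications of downsizing (\cref{lem:3}).

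First I would set up the projection. Define the scalar weight $w \colon E \to \mathbb Z$ by $w(e) = d^T W(e)$. Since $W(e) \in \{-\Delta,\dotsc,\Delta\}^m$ one has $|w(e)| \le \|d\|_1 \|W(e)\|_\infty \le \|d\|_1 \Delta =: \Delta'$, so $w$ takes values in $\{-\Delta',\dotsc,\Delta'\}$. Moreover $|w(A) - w(B)| = |d^T (W(A) - W(B))| \le \|d\|_\infty \|W(A)-W(B)\|_1 \le \|d\|_1 \mu =: \mu'$. Applying \cref{thr:1} to the disjoint bases $A, B$ with scalar weight $w$ and parameters $\Delta',\mu'$ yields $A^* \subseteq A$ and $B^* \subseteq B$ of equal cardinality $|A^*| = |B^*| \ge (k - \|d\|_1 \mu)/(2\|d\|_1 \Delta + 1)^2$, with $A \setminus A^* \cup B^* \in \I$ and with $d^T W(a) = w(a) \ge w(b) = d^T W(b)$ for every $a \in A^*$, $b \in B^*$.

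The sets $A^*, B^*$ are not yet $W$-unicolor, so I refine. Since at most $(2\Delta+1)^m$ distinct $W$-values appear in $\{-\Delta,\dotsc,\Delta\}^m$, by pigeonhole there is a $W$-unicolor subset $B^{**} \subseteq B^*$ of size $|B^{**}| \ge |B^*|/(2\Delta+1)^m$. Invoking \cref{lem:3} on the pair $(A^*, B^*)$ inside the independent set $I = A$ produces a matching $A^{**} \subseteq A^*$ with $|A^{**}| = |B^{**}|$ and $A \setminus A^{**} \cup B^{**} \in \I$. The set $A^{**}$ is not necessarily $W$-unicolor, so I repeat the argument on the $A$-side: pick a $W$-unicolor subset $A' \subseteq A^{**}$ with $|A'| \ge |A^{**}|/(2\Delta+1)^m$, and a second application of \cref{lem:3} to the pair $(A^{**}, B^{**})$ yields $B' \subseteq B^{**}$ of the same cardinality as $A'$ with $A \setminus A' \cup B' \in \I$. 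The set $B'$ remains $W$-unicolor as a subset of $B^{**}$; the inequality $d^T W(a) \ge d^T W(b)$ persists since $A' \subseteq A^*$ and $B' \subseteq B^*$; and telescoping the size bounds gives $|A'| = |B'| \ge (k - \|d\|_1\mu)/\bigl((2\|d\|_1 \Delta + 1)^2 (2\Delta+1)^{2m}\bigr)$, matching the claimed bound.

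The only real subtlety is that a single downsizing after refining $B^*$ does not suffice, because the matching $A^{**}$ returned by downsizing is not guaranteed to be $W$-unicolor; one must refine on the $A$-side and invoke downsizing a second time. The key observation making this work is that the second downsizing shrinks $B^{**}$ to some $B' \subseteq B^{**}$, and since $B^{**}$ is already $W$-unicolor, so is $B'$; thus the unicolor property achieved in the first pass is automatically preserved.
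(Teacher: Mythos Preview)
Your argument is correct and follows essentially the same route as the paper: project onto $d$ to reduce to \cref{thr:1}, then apply pigeonhole and downsizing (\cref{lem:3}) twice to make first the $B$-side and then the $A$-side $W$-unicolor while preserving the exchange property. Your closing remark about why two downsizings are needed is exactly the point, and the size bounds telescope as you state.
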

\begin{proof}
  We apply \cref{thr:1} for the single-dimensional weight function
\begin{displaymath}
  w(e) = d\T W(e), \, e\in E.
\end{displaymath}
Then $|w(A) - w(B)| ≤ \|d\|_1 ⋅ μ$ and we obtain sets
$A^{(1)}\subseteq A$ and $B^{(1)}\subseteq B$ of equal cardinality
with
\begin{enumerate}[1)]
\item $A\setminus A^{(1)}\cup B^{(1)} \in \I$,  \label{item:6}
\item $|A^{(1)}| = |B^{(1)}| \ge (k - \|d\|_1  μ ) / (2 \|d\|_1 Δ + 1)^2$, and \label{item:7}
\item $w(a) \ge w(b)$ for all $a\in A^{(1)}, b\in B^{(1)}$. \label{item:8}
 \end{enumerate}
 Note that there exists a $B^{(2)} \subseteq B^{(1)}$ with at least $|B^{(1)}|/(2\Delta+1)^m$ elements $e$ of
 the same weight $W(e)$. Via the downsizing (\Cref{lem:3}) we can
 restrict to $B^{(2)}$ and a corresponding subset of $A^{(2)}
 \subseteq A^{(1)}$ while still satisfying properties \ref{item:6}) and \ref{item:8}). Next, we observe that there exists $A^{(3)}
 \subseteq A^{(2)}$ with at least $|A^{(2)}|/(2\Delta+1)^m$ elements $e$ of
 the same weight $W(e)$. Downsizing again, we obtain two unicolor sets
 $A^{(2)}$, $B^{(2)}$ that satisfy properties \ref{item:6}) and \ref{item:8}) of cardinality 
 \begin{displaymath}
   |A^{(2)}| = |B^{(2)}| 
   \ge \frac{k - \|d\|_1 \mu}{(2 \|d\|_1 \Delta + 1)^2 \cdot (2\Delta+1)^{2m}} .
	\end{displaymath}
	Thus, the sets $A^{(2)}, B^{(2)}$ satisfy all required properties.
\end{proof}

\begin{lemma}\label{lem:several-exchanges}
  Let $A, B\in \I$ be disjoint with $|A| = |B| = k$, 
  and $\mu = \lVert W(A) - W(B) \rVert_1$.  Then there are sets
  \begin{equation}
    \label{eq:11}
          A_0,A_1,\dotsc,A_{\ell}\subseteq A \,\text{ and } B_0,B_1,\dotsc,B_{\ell}\subseteq B, 
        \end{equation}
        all unicolor,
such that
\begin{enumerate}[i)]
\item $A\setminus A_i\cup B_i \in \I$ for all $i$, \label{item:9}
\item $|A_i| = |B_i| \ge k / (2 m Δ)^{10 m} - \mu$ for all $i$, \label{item:10}
\item $-\delta_0 \in \mathrm{cone}(\{\delta_1,\dotsc,\delta_{\ell}\})$, where $\delta_i = W(a) - W(b)$ for all $a\in A_i, b\in B_i$. 
  \label{prop:conic}
\end{enumerate}
\end{lemma}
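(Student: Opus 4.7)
The plan is to construct the pairs $(A_i, B_i)$ one at a time by iteratively invoking \cref{lem:m-monotone} with directions supplied by \cref{lem:hadamard}, while maintaining as an invariant that the cone $K_i := \cone(\delta_0,\delta_1,\dots,\delta_i)$ generated by the weight-difference vectors accumulated so far is pointed. The procedure halts exactly when the loop condition fails, i.e., $-\delta_0 \in \cone(\delta_1,\dots,\delta_\ell)$, which is precisely property~iii).

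For initialization, I apply \cref{lem:m-monotone} with a small-norm direction such as $d^{(0)} = \mathbf e_1$ to obtain unicolor $A_0, B_0$ and set $\delta_0 := W(a) - W(b)$ for $a \in A_0$, $b \in B_0$. If $\delta_0 = 0$ I output $\ell = 0$ (property~iii) is vacuously $0 \in \cone(\emptyset) = \{0\}$); otherwise $K_0$ is a single ray and thus pointed. In each subsequent iteration $i \ge 1$, as long as $-\delta_0 \notin \cone(\delta_1,\dots,\delta_{i-1})$, I apply \cref{lem:hadamard} to the pointed cone $K_{i-1}$ (whose generators lie in $\{-2\Delta,\dots,2\Delta\}^m$) to obtain an integer vector $d \in \mathbb{Z}^m$ with $\|d\|_\infty \le (2\Delta)^m m^{m/2+1}$ and $d^\T x < 0$ for all non-zero $x \in K_{i-1}$; in particular $d^\T \delta_j < 0$ strictly for every $j < i$. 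Feeding $d$ into \cref{lem:m-monotone} produces unicolor $A_i, B_i$ with $A \setminus A_i \cup B_i \in \I$ and $\delta_i := W(a) - W(b)$ satisfying $d^\T \delta_i \ge 0$. Plugging $\|d\|_1 \le m\|d\|_\infty \le (2\Delta)^m m^{m/2+2}$ into the size estimate of \cref{lem:m-monotone} simplifies by routine calculation to $|A_i| = |B_i| \ge k/(2m\Delta)^{10m} - \mu$.

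To close the argument I verify that the invariant is preserved and that the procedure terminates. If at some step $\delta_i = 0$, then $A \setminus A_i \cup B_i$ is already a basis of weight $W(A)$, so we output $\ell = 0$ with this pair playing the role of $(A_0, B_0)$. Otherwise, the combination of strict inequalities $d^\T \delta_j < 0$ for $j < i$ with the non-strict $d^\T \delta_i \ge 0$ implies that any putative conic dependency $\sum_{j=0}^i \lambda_j \delta_j = 0$ with $\lambda_j \ge 0$ forces $\lambda_j = 0$ for every $j$ (after taking inner product with $d$); hence $K_i$ remains pointed and in particular $\delta_i \notin K_{i-1}$, so the cone strictly grows. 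Since every $\delta_i$ lies in the finite set $\{-2\Delta,\dots,2\Delta\}^m$, the loop terminates after at most $(4\Delta+1)^m$ iterations, and the only way to exit is that the loop condition has failed, yielding property~iii).

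The hardest part is maintaining the pointedness invariant so that \cref{lem:hadamard} applies in every iteration. This hinges on the fact that \cref{lem:hadamard} separates the cone strictly ($d^\T x < 0$ for non-zero $x$), not merely weakly, and it is precisely this strictness that propagates to $K_i$ once we add a new $\delta_i$ that only satisfies the weak inequality $d^\T \delta_i \ge 0$. Everything else—the size bound and termination—follows routinely from the bound on $\|d\|_\infty$ granted by \cref{lem:hadamard}.
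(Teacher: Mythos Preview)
Your overall strategy matches the paper's, but the invariant you claim---that $K_i = \cone(\delta_0,\dots,\delta_i)$ stays pointed---does not hold, and the justification you give for it is incorrect. Taking the inner product of a putative dependency $\sum_{j=0}^i \lambda_j \delta_j = 0$ with $d$ yields
\[
\sum_{j<i}\lambda_j\,(d^\T\delta_j) + \lambda_i\,(d^\T\delta_i) = 0,
\]
and since the first sum is $\le 0$ while the last term is $\ge 0$, this equation can be satisfied with positive $\lambda_j$'s whenever $d^\T\delta_i > 0$. Concretely, in dimension $m=2$ one may obtain $\delta_0=(1,1)$; then $d=(-1,-1)$ and, say, $\delta_1=(1,-1)$; then $d=(-1,0)$ and, say, $\delta_2=(-1,1)$. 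Now $\delta_1+\delta_2=0$, so $K_2$ is flat, yet $-\delta_0=(-1,-1)\notin\cone(\delta_1,\delta_2)$ (the latter is just the line $\mathbb R\cdot(1,-1)$). Hence your loop condition does not fail, but you can no longer invoke \cref{lem:hadamard} at the next step.

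The paper closes this gap by \emph{checking} pointedness of the current cone at each step rather than asserting it as an invariant. If the cone becomes flat, there is a nontrivial dependency $\sum_j \lambda_j\delta_j=0$; one then picks any $k$ with $\lambda_k>0$ and \emph{relabels} $(A_k,B_k)$ as the new $(A_0,B_0)$, which immediately gives $-\delta_k\in\cone(\{\delta_j:j\neq k\})$, i.e.\ property~iii). This relabeling is precisely the missing ingredient in your argument: a dependency with $\lambda_0=0$, as in the example above, does not certify~iii) for your fixed choice of $\delta_0$, so you must be willing to change which pair plays the role of index~$0$.
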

\begin{proof}
  We construct the sets iteratively.  $A_0 \subseteq A$ and
  $B_0 \subseteq B$ can be created from \Cref{lem:m-monotone} using
  $d = (0,0,\dotsc,0)$.

  Suppose we already have $A_0,\dotsc,A_i$ and $B_0,\dotsc,B_i$ and
  that these sets satisfy \ref{item:9}) and \ref{item:10}).  If
  $C = \mathrm{cone}(\{\delta_0,\dotsc,\delta_{i}\})$ is flat then it
  contains some non-zero $x, y$ with $x + y = 0$. In particular, there
  exists $\lambda\in\mathbb R_{\ge 0}^{i+1}$ with
  $\sum_{j=0}^i \lambda_j \delta_j = 0$ and $\lambda_k > 0$ for some
  $k\in\{0,1,\dotsc,i\}$.  After swapping $A_0, A_k$ and $B_0, B_k$ in \eqref{eq:11}, the sequence of sets also satisfies~\ref{prop:conic}).

  Assume now that $C$ is pointed.  From \Cref{lem:hadamard} it follows
  that for some $d\in\mathbb Z^m$ with
  $\lVert d\rVert_\infty \le (2\Delta)^{m} m^{m/2+1}$ the halfspace
  \begin{equation*}
    H = \{x \in \mathbb R : d\T x \ge 0\} 
  \end{equation*}
  intersects $C$ exactly in $0$.  By
  applying~\cref{lem:m-monotone}, we obtain unicolor sets
  $A_{i+1}, B_{i+1}$ satisfying \ref{item:9}) and
  \begin{align*}
    |A_{i+1}| = |B_{i+1}| &\ge \frac{k - \mu \|d\|_1}{(2\Delta \| d \|_1 + 1)^2(2\Delta+1)^{2m}} \\
                          &\ge \frac{k}{(4\Delta \| d \|_1)^2(4\Delta)^{2m}} - \mu \\
                          &\ge \frac{k}{(4\Delta (2Δ)^m m^{m/2 +2})^2(4\Delta)^{2m}} - \mu \\
                          &\ge \frac{k}{ (2 Δ m)^{10 m}} - \mu
  \end{align*}
	thus also satisfying~\ref{prop:conic}).  Furthermore, if $\delta_{i+1} = 0$, then, after swapping $A_0, A_{i+1}$ and $B_0, B_{i+1}$ in \eqref{eq:11}, the sequence of sets also satisfies~\ref{prop:conic}). Otherwise, 
  $\delta_{i+1}\in H$ must be different from
  $\delta_1,\dotsc,\delta_i \notin H$.  Since there are finitely many
  values for $\delta_i$, the procedure will ultimately terminate and
  therefore eventually satisfy~\eqref{prop:conic}.
\end{proof}

\subsubsection*{The proof of the multidimensional sensitivity theorem}

As in the one-dimensional case, \cref{thr:5} reduces to the
following statement by repeating the arguments
from~\cref{sec:sens-with-single}.
\begin{lemma}
  Let $A, B\in \I$ disjoint with $|A| = |B|$ and let
  $W: E\rightarrow \{-\Delta,\dotsc,\Delta\}^m$ be a multidimensional
  weight function with $\|W(A) - W(B)\|_1 \le \mu$ , where $μ ∈ ℕ_+$. If
  \begin{displaymath}
    |A| = |B| \ge (2mΔ)^{12 ⋅m} \mu 
  \end{displaymath}
  then there exists $A'\in \I$, $A'\neq A$ with
  \begin{displaymath}
    W(A') = W(A) \quad \text{ and } \quad |A'| = |A|.
  \end{displaymath}
\end{lemma}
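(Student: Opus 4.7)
The plan is to follow the template of the $m = 1$ proof (Theorem~\ref{thr:2}), replacing its two-exchange argument by a multi-exchange argument driven by the structural results of this section. First, apply \cref{lem:several-exchanges} to $A, B$ to obtain unicolor sets $A_0,\dots,A_\ell \subseteq A$ and $B_0,\dots,B_\ell \subseteq B$ of common cardinality at least $k/(2m\Delta)^{10m} - \mu$ with $(A\setminus A_i) \cup B_i \in \I$, whose weight differences $\delta_i$ satisfy $-\delta_0 \in \cone(\{\delta_1,\dots,\delta_\ell\})$. If some $\delta_i = 0$ then $(A \setminus A_i) \cup B_i$ is already a basis distinct from $A$ of weight $W(A)$ and we are done, so assume $\delta_i \ne 0$ throughout.

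Next, invoke \cref{lem:caratheodory} with $X = \{\delta_0,\dots,\delta_\ell\}$ and $x = \delta_0$ to produce non-negative integers $\lambda_0,\dots,\lambda_\ell$, supported on at most $m+1$ indices, with $\lambda_0 \ge 1$, $\sum_i \lambda_i \delta_i = 0$, and $\|\lambda\|_\infty \le \Delta^m m^{m/2}$. Discard indices with $\lambda_i = 0$ and set $\Lambda = \sum_i \lambda_i \le (m+1)\Delta^m m^{m/2}$. Using the downsizing lemma (\cref{lem:3}), shrink each $A_i$ and $B_i$ to cardinality exactly $\Lambda$ while preserving unicolority and the independence of $(A\setminus A_i) \cup B_i$. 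The assumption $k \ge (2m\Delta)^{12m}\mu$ comfortably dominates $(2m\Delta)^{10m}(\mu + \Lambda)$, so there is enough room.

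Then set
\[
y^* \;=\; \sum_i \frac{\lambda_i}{\Lambda}\,\chi\!\bigl((A \setminus A_i) \cup B_i\bigr) \;=\; \chi(A) - \frac{1}{\Lambda}\sum_i \lambda_i\bigl(\chi(A_i) - \chi(B_i)\bigr).
\]
By construction $y^* \in P_B(M)$ as a convex combination of bases, and $y^* \ne \chi(A)$ since $\lambda_0 \ge 1$. Because $|A_i| = |B_i| = \Lambda$ and both sets are unicolor, $W(A_i) - W(B_i) = \Lambda \delta_i$, giving $W y^* = W(A) - \sum_i \lambda_i \delta_i = W(A)$; likewise $\chi(E_\alpha)\T y^* = |A \cap E_\alpha| - \sum_i \lambda_i\bigl(\mathbf{1}[A_i \subseteq E_\alpha] - \mathbf{1}[B_i \subseteq E_\alpha]\bigr)$ is an integer for every $\alpha \in \{-\Delta,\dots,\Delta\}^m$. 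Now mimic the endgame of Theorem~\ref{thr:2}: define the partition matroid $M_p$ with caps $\chi(E_\alpha)\T y^*$; then $y^* \in P_B(M) \cap P_B(M_p)$ lies on the face where all partition constraints are tight, and by Edmonds--Lawler integrality of the intersection polytope it is a convex combination of integer vertices of that face. Any such vertex $\chi(A')$ satisfies $|A'| = |A|$ and $W(A') = \sum_\alpha \alpha \cdot |A' \cap E_\alpha| = W y^* = W(A)$, and at least one is distinct from $\chi(A)$.

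The main obstacle is the third step: $y^*$ has to simultaneously (i) be a convex combination of bases, (ii) preserve the weight, and (iii) assign an integer value to every color-class sum $\chi(E_\alpha)\T y^*$ so that the partition matroid $M_p$ is well defined. The coupling of the downsize target $\Lambda = \sum_i \lambda_i$ with the convex weights $\lambda_i/\Lambda$ is precisely what aligns the Carath\'eodory cancellation $\sum_i \lambda_i \delta_i = 0$ with integrality of the color-class sums; the naive $\tfrac{1}{2}$-scaling used in the single-constraint proof breaks convexity as soon as three or more exchanges must be combined, so the choice of scale $\Lambda$ is essential.
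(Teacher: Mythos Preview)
Your proposal is correct and follows essentially the same approach as the paper: invoke \cref{lem:several-exchanges}, apply \cref{lem:caratheodory} to obtain an integral relation $\sum_i \lambda_i \delta_i = 0$ with small support and bounded coefficients, downsize the unicolor exchange sets accordingly, form a fractional point $y^*$ in $P_B(M)$ with $Wy^* = W(A)$ and integral color-class sums, and finish via integrality of the matroid intersection polytope with a partition matroid.

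The only difference is cosmetic. The paper downsizes each pair to $|A'_i| = |B'_i| = (\ell+1)\lambda_i$ and takes the \emph{uniform} convex combination
\[
y^* \;=\; \frac{1}{\ell+1}\sum_{i=0}^{\ell}\chi\bigl((A\setminus A'_i)\cup B'_i\bigr),
\]
whereas you downsize all pairs to the \emph{same} size $\Lambda=\sum_i\lambda_i$ and use weights $\lambda_i/\Lambda$. Both choices make $\chi(E_\alpha)\T y^*$ integral for every color $\alpha$ and give $Wy^*=W(A)$ via the identity $\sum_i\lambda_i\delta_i=0$; neither is more general or more elementary. Your separate treatment of the case $\delta_i=0$ is harmless but not needed (the paper simply uses that $\lambda\neq 0$, which already forces $y^*\neq\chi(A)$), and the numerical check that the original $|A_i|=|B_i|$ exceed the required downsized cardinality goes through in both versions under the hypothesis $k\ge(2m\Delta)^{12m}\mu$.
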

\begin{proof}
  From Lemma~\ref{lem:several-exchanges} we obtain unicolor sets
  $A_i, B_i$, $i=0,1,\dotsc,\ell$ with
  \begin{enumerate}
  \item $A\setminus A_i \cup B_i \in \I$,
  \item $|A_i| = |B_i| \ge (2 m Δ)^{2 m} -1$,
  \item $-\delta_0 \in \cone(\{\delta_1,\dotsc,\delta_\ell\})$, where
    $\delta_i = W(b) - W(a)$ for each $b\in B_i$ and $a\in A_i$.
  \end{enumerate}
  Due to~\cref{lem:caratheodory} we may assume that $\ell\le m$ and
  there exists
  $\lambda\in\mathbb Z_{\ge 0}^{\ell+1} \setminus \{\mathbf 0\}$ with
  $\sum_{i=0}^\ell \lambda_i \delta_i = 0$ and
  $\|\lambda\|_\infty \le (2\Delta)^m m^{m/2}$.  Apply downsizing 
  (\cref{lem:3}) to each $A_i,B_i$, to obtain an arbitrary
  $A'_i \subseteq A_i$, $|A'_i| = (\ell+1) \lambda_i$ and a
  corresponding $B'_i\subseteq B_i$ with $|B'_i| = |A'_i|$ and
  $A\setminus A'_i \cup B_i\in\I$.  We proceed as in the case
  with a single equality constraint and refer the reader to it for details.
  It holds that
  \begin{equation*}
    y^* = \sum_{i=0}^\ell \frac{1}{\ell+1} \chi(A\setminus A'_i \cup B'_i)
  \end{equation*}
  is in $P_B(M)$, satisfies $\sum_{e\in E} y^*_e = \mathrm{rank}(M)$,
  $W y^* = W(A)$ and has an integral number of elements of each
  weight vector.  Thus, $y^*$ must be a convex combination of bases of
  $M$, all of which have weight $W(A)$ and since $y^* \neq \chi(A)$
  not all of them can be equal to $A$.
\end{proof}

\section{Applications}
\label{sec:applications}
In this section we give specific examples of problems that can be cast
as finding a basis of a matroid subject to $m$ constraints, each of
which have one of the following forms.
\begin{itemize}
	\item Equality constraint: given $w: E\rightarrow \{-\Delta,\dotsc,\Delta\}$ and $\beta\in\mathbb Z$, require $w(B) = \beta$.
	\item Inequality constraint: given $w: E\rightarrow \{-\Delta,\dotsc,\Delta\}$ and $\beta\in\mathbb Z$, require $w(B) \le \beta$, or alternatively $w(B) \ge \beta$.
	\item Congruence constraints: given $p\in\{1,2,\dotsc,\Delta\}$, $w: E\rightarrow \{0,1,\dotsc,p-1\}$, and $\beta\in\{0,1,\dotsc,p-1\}$, require $w(B) \equiv \beta \mod p$.
\end{itemize}
While we proved our FPT algorithm only for the first type, it is easy to reduce the other two to it.
This follows from standard constructions similar to slack variables.
To this end, consider a matroid $M = (E, \I)$.
Suppose for a given weight function $w: E\rightarrow \{-\Delta,\dotsc,\Delta\}$ and $\beta\in\mathbb Z$, we are searching
for a basis $B$ with $w(B) \le \beta$ and possibly other linear constraints (with one of the three types
from above).
We define $M'$ as the direct sum of $M$ and a rank $n$ uniform matroid with $2n\Delta$ elements. The weight function $w$
is extended by setting an arbitrary half of the elements in the uniform matroid to weight zero and the other half to weight~$1$. Any other linear constraint is extended with zero coefficients for the new elements, which means that they do not affect it. It follows easily that a basis $B$ of the original matroid $M$ can be extended to a basis $B'$ of $M'$ that satisfies $w(B') = \beta$ if and only if $w(B)\le \beta$.

Now suppose we are given $p\in\{1,2,\dotsc,\Delta\}$, $w: E\rightarrow \{0,1,\dotsc,p-1\}$, and $\beta\in\{0,1,\dotsc,p-1\}$ and are searching for a basis $B$ with $w(B) \equiv \beta \mod p$ and possibly other constraints.
Similar to before, we obtain $M'$ as the direct sum of $M$ with a uniform matroid of rank $n$ over
$2n$ elements. The weight function $w$ is extended such that $n$ many new
elements have weight $-p$ and $0$ each. Again, any other linear constraint is
extended with zero coefficients for the new elements. Then a basis $B$ of $M$ is extendible to a basis $B'$ of $M'$ that satisfies $w(B') = \beta$ if and only if $w(B) \equiv \beta \mod p$.

\medskip

Very recently, matroid problems with labels from an abelian group have gained some attention. Liu and
Xu~\cite{liu2023congruency} study the following problem.

\defproblem{Group-Constrained Matroid Base}
{Matroid $M = (E,\I)$, a 
labelling $\psi : E \rightarrow \Gamma$ for an abelian group $(\Gamma, \odot)$, and $g \in \Gamma$.}
{Find base $B$ of $M$ with $g = \psi(B) := \bigodot_{b\in B} \psi(b)$}

Liu and Xu prove that if $\Gamma = \mathbb{Z}_m$ and $m$ is
either product of two primes or a prime power, then the problem can be solved in
FPT time in $m$. Our result generalizes this to all finite abelian groups. 

\begin{corollary}
  If $(\Gamma, \odot)$ is a finite abelian group then Group-Constrained
  Matroid Base can be solved in $f(m)\cdot n^{O(1)}$ time, for
  $m = |\Gamma|$.
\end{corollary}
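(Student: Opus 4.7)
The plan is to reduce Group-Constrained Matroid Base to the exact-weight matroid basis problem addressed by \cref{thm:alg}. By the fundamental theorem of finite abelian groups, there is an isomorphism $\Gamma \cong \mathbb{Z}_{p_1} \times \cdots \times \mathbb{Z}_{p_k}$, where each $p_i$ is a prime power with $\prod_{i=1}^{k} p_i = m$. In particular one has $k \le \lceil \log_2 m \rceil$ and $p_i \le m$ for all $i$. I fix such an isomorphism and use it to rewrite each label $\psi(e) \in \Gamma$ as a vector $(w_1(e), \dots, w_k(e)) \in \prod_i \{0, \dots, p_i - 1\}$; analogously, I identify the target $g$ with $(g_1,\dots,g_k)$.

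Under this identification the group equality $\psi(B) = g$ is equivalent to the conjunction of the $k$ scalar congruence constraints $\sum_{b \in B} w_i(b) \equiv g_i \pmod{p_i}$, one for each $i = 1, \dots, k$. Each such constraint falls under the ``Congruence constraints'' template discussed at the beginning of \cref{sec:applications}, with weight bound $p_i - 1 \le m - 1$. As noted there and made precise in \cref{sec:alt-contraints}, any such congruence constraint can be transformed into an equality constraint with integer weights of absolute value at most $O(m)$. The resulting instance is therefore an exact-weight matroid basis problem with $O(\log m)$ equality constraints and $\Delta = O(m)$.

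Applying \cref{thm:alg} to this instance yields the claimed running time $f(m) \cdot n^{O(1)}$, where $f(m)$ is obtained by substituting dimension $O(\log m)$ and weight bound $O(m)$ into $(m\Delta)^{O(\Delta)^m}$. All remaining steps of the reduction\,---\,computing the isomorphism to $\mathbb{Z}_{p_1} \times \cdots \times \mathbb{Z}_{p_k}$, translating labels, and setting up the weight matrix\,---\,can be carried out in $\mathrm{poly}(m,n)$ time, since $\Gamma$ has only $m$ elements and can be processed by brute force in $m$. I expect no genuine obstacle in carrying out this plan: the entire content of the corollary is the observation that the group-theoretic constraint decomposes into a logarithmic number of integer-valued constraints that fit the framework of \cref{thm:alg}, together with the fact that \cref{thm:alg} already handles congruence constraints via the reduction in \cref{sec:alt-contraints}.
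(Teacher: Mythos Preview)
Your proposal is correct and follows essentially the same approach as the paper: decompose $\Gamma$ via the structure theorem into a product of cyclic groups of prime-power order, obtain at most $\log_2 m$ congruence constraints with moduli at most $m$, and invoke \cref{thm:alg} through the reduction of congruence constraints to equality constraints. The paper's proof is more terse but the underlying argument is identical.
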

\begin{proof}
	Every finite abelian group is isomorphic to the direct product of cyclic groups. Thus, we may
	assume without loss of generality that $\Gamma = \mathbb Z_{m_1} \times \mathbb Z_{m_2} \times \cdots \times \mathbb Z_{m_\ell}$, where $m = m_1 \cdot \dotsc \cdot m_{\ell}$ for prime powers $m_1,m_2,\dotsc,m_{\ell}\le m$. 
	We can therefore model the problem using $\ell \le \log_2 m$ congruency constraints with entries bounded by $m$.
\end{proof}
Similar to our results, Liu and Xu's techniques are based on
proximity statements. However, their techniques rely on specific groups.
More precisely, Liu
and Xu~\cite{liu2023congruency} prove that their techniques would work if
a certain conjecture in additive combinatorics is true. The conjecture 
is proven when $m$ is either product of two primes or a prime power, which allows
them to obtain the result. In that matter, our techniques allow us to circumvent
this issue.

Motivated by Liu and Xu's work, H\"orsch et al.~\cite{horsch2024problems} considered the
problem in the setting of non-finite groups. Among
other problems, they show an FPT algorithm for the Group-Constrained
Matroid Problem with $g=0$ parameterized by $|\Gamma|$ in a special
case when the matroid is $\text{GF}(q)$-representable for a prime power
$q$. Similarly, to~\cite{liu2023congruency} their techniques also rely on
an additive combinatorics result of Schrijver and
Seymour~\cite{schrijver1990spanning}, which prohibits their techniques from
working in general finite abelian groups.

\medskip

Budgeted matroid problems, in which one has to find an independent set subject to one or more budget constraints and possibly maximizing a profit function,
have been studied with a great extend towards approximation schemes, see e.g.~\cite{grandoni2010approximation, doron2023eptas, doron2023budgeted}. Finding bases is generally at least as hard as finding independent sets, since one
can always fix the cardinality of the solution therefore reducing to bases.
In the area of FPT algorithms, Marx~\cite{marx2009parameterized} devised an algorithm for these type of problems,
specifically motivated by the problem of finding a feedback edge set.
Given a graph $G(V,E)$ a feedback edge set is a subset $X$ of edges such that $G(V,E\setminus X)$ is acyclic.

\defproblem{Feedback Edge Set with Budget Vectors}
{A graph $G = (V,E)$, a vector $W(e) \in \mathbb Z_{\ge 0}^m$ for each $e \in E$, a
budget $b \in \mathbb{Z}_{\ge 0}^m$.}
{Find a minimum cardinality feedback edge set $X$ such that $W(X) \le b$.}

Note that for $m=1$, this is a weighted variant of Feedback Edge Set which can
be solved in polynomial time by a Greedy algorithm~\cite{marx2009parameterized},
however, for unbounded $m$ and $\Delta = \lVert W \rVert_\infty$ the problem is
NP-hard. Marx~\cite{marx2009parameterized} presented a randomized FPT algorithm
in the parameters $m$, $\Delta$, and $|X|$. A direct application of our theorem
is that a weaker parameterization that depends on just $m$ and $\Delta$
suffices.

\begin{corollary}
    Feedback Edge Set with Budget Vectors can be solved in $(m\Delta)^{O(m^2)}
    \cdot n^{O(1)}$ randomized time.
\end{corollary}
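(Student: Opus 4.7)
The plan is to cast Feedback Edge Set with Budget Vectors as an exact-weight basis problem on the graphic matroid, which is linear, and then apply the randomized part of \cref{thm:alg}.

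First, I would use the standard observation that $X \subseteq E$ is a feedback edge set if and only if $E \setminus X$ is acyclic. Hence a \emph{minimum} cardinality feedback edge set is precisely the complement of a spanning forest of $G$, which is a basis $T$ of the graphic matroid $M(G)$. The minimum cardinality is therefore the fixed value $|E|-\mathrm{rank}(M(G))$, and the problem reduces to finding a basis $T$ of $M(G)$ whose complement satisfies $W(E\setminus T)\le b$, i.e.\ $W(T)\ge W(E)-b$ componentwise. This is a basis problem on $M(G)$ subject to $m$ inequality constraints with integer weights bounded in absolute value by $\Delta$.

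Second, I would invoke the inequality-to-equality reduction deferred to \cref{sec:alt-contraints}. The standard slack construction augments the ground set with an auxiliary partition (or uniform) matroid whose elements encode the slack of each constraint; the direct sum $M' = M(G) \oplus M_{\mathrm{slack}}$ is a matroid, and the combined weights can be chosen so that $\|W'\|_\infty = O(\Delta)$ while the number of equality constraints remains $m$. Under this reduction, bases $B$ of $M'$ with $W'(B) = \Target$ are in bijection with bases $T$ of $M(G)$ satisfying $W(T)\ge W(E)-b$. Crucially, the graphic matroid is linear over $\mathbb{Q}$ via its signed incidence matrix, the slack component (a partition/uniform matroid) is linear over every field, and a direct sum of matroids linear over a common field is itself linear with an explicit representation computable in polynomial time. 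Hence $M'$ is a linear matroid with an explicit representation, feeding exactly into the hypothesis of the second part of \cref{thm:alg}.

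Third, I would apply the randomized linear-matroid algorithm from \cref{thm:alg} to $M'$ with weight matrix $W'$ and target $\Target$. This finds the desired basis (or reports infeasibility) in time $(m\Delta)^{O(m^2)} \cdot n^{O(1)}$, matching the claimed bound.

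The only nontrivial point is the reduction of \cref{sec:alt-contraints}: one has to verify that it preserves linearity of the matroid and keeps $\Delta$ within a constant factor while leaving $m$ unchanged. Both properties are standard for slack-variable constructions via direct sums with partition matroids, so I do not expect any genuine obstacle beyond bookkeeping.
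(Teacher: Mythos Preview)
Your proposal is correct and follows essentially the same route as the paper: observe that a minimum feedback edge set is the complement of a spanning forest, rewrite the budget constraints as $W(T)\ge W(E)-b$ on a basis of the graphic matroid, and invoke \cref{thm:alg}. The paper's own proof is a single sentence to this effect; your additional checks (that the inequality reduction of \cref{sec:alt-contraints} keeps $m$ and $\Delta$ essentially unchanged and that the resulting direct sum of the graphic matroid with uniform slack matroids remains linear with an explicit representation) are exactly the bookkeeping one needs to make the appeal to the linear-matroid branch of \cref{thm:alg} rigorous.
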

\begin{proof}
	The problem is equivalent to finding a spanning forest $F$ with $W(F) \ge W(E) - b$, which can be solved
	in the mentioned time using~\cref{thm:alg}.
\end{proof}

Fairness considerations have inspired new variants of many problems, where
elements belong to different groups and each group needs to be represented in
the solution to some level~\cite{bolukbasi2016man,celis2018ranking,bera2014approximation, bandyapadhyay2019constant,
anegg2022technique, jia2022fair}. For example
Abdulkadiro{\u{g}}lu and S\"onmez~\cite{abdulkadirouglu2003school} address the assignment of
students to schools, where fairness constraints are selected to achieve
racial, ethnic, and gender balance. For similar models see also~\cite{kamada2012stability, sankar2021matchings}.

\defproblem{Matching with Group Fairness Constraints}
{A bipartite graph $G = (A\cup B,E)$, where each $b\in B$ belongs to a set of groups $G(b)\subseteq\{1,2,\dotsc,m\}$, and
a quote $q_i \in \mathbb{Z}_{\ge 0}$ for each group $i=1,2,\dotsc,m$.}
{Find a matching containing at least $q_i$ many elements $b\in B$ with $i\in G(b)$ for $i=1,2,\dotsc,m$.}

To the best of our knowledge, FPT algorithms have not been considered for this problem before.
\begin{corollary}
    Matching with Group Fairness Constraints can be solved in $(m\Delta)^{O(m^2)} \cdot n^{O(1)}$ randomized time.
\end{corollary}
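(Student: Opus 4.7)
The plan is to cast Matching with Group Fairness Constraints as finding a basis of a linear matroid subject to $m$ equality constraints with $\|W\|_\infty = 1$, and then invoke the randomized linear-matroid version of \cref{thm:alg}.

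First, a feasible matching exists if and only if there is a \emph{matchable} subset $S \subseteq B$, i.e., one that can be matched to distinct vertices of $A$ along edges of $G$, with $|S \cap B_i| \ge q_i$ for every group $i$, where $B_i = \{b \in B : i \in G(b)\}$; given such an $S$, an actual matching covering it is found in polynomial time by bipartite matching. The matchable subsets of $B$ are precisely the independent sets of the transversal matroid $M$ on $B$. This matroid is linear: the $|A|\times|B|$ matrix whose entry $(a,b)$ is an independent uniformly random element of a sufficiently large field if $ab\in E$ and zero otherwise represents $M$ with high probability, by Edmonds's theorem combined with the Schwartz--Zippel lemma (see the discussion of linear matroids in \cref{sec:algorithms}).

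Next, I convert from independent sets of $M$ to bases of a linear matroid. Let $r = \rank(M)$; take the direct sum of $M$ with the free matroid on $r$ fresh dummy elements $D$ of weight zero, and truncate to rank $r$. The bases of the resulting matroid $M^+$ are exactly the sets $S \cup T$ with $S$ independent in $M$, $T \subseteq D$, and $|S|+|T| = r$, so $S$ ranges over all independent sets of $M$. Both direct sum and truncation preserve linearity (the latter by multiplying the representing matrix by a random generic matrix over a sufficiently large field), so an explicit representation of $M^+$ is computable in randomized polynomial time.

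Now assign each $b \in B$ the weight vector whose $i$-th coordinate is $1$ if $i \in G(b)$ and $0$ otherwise, and each dummy the zero vector. For any basis $S \cup T$ of $M^+$, the $i$-th coordinate of $W(S \cup T)$ equals $|S \cap B_i|$, so the problem becomes that of finding a basis with $W(S \cup T) \ge q$. These $m$ inequalities with $\|W\|_\infty = 1$ reduce to equality constraints via the slack construction of \cref{sec:alt-contraints}, which attaches $O(mn)$ free slack elements of weight $\pm 1$ without increasing $\|W\|_\infty$. Invoking the randomized linear-matroid variant of \cref{thm:alg} then solves the resulting problem within the claimed time bound $(m\Delta)^{O(m^2)} \cdot n^{O(1)}$. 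The main technicality is stitching these matroid operations (transversal representation, direct sum, truncation, slack extension) together into one explicit linear matroid whose representation is computable in randomized polynomial time; each step on its own is standard, but one must verify that the final rank and the sizes of the slack blocks are chosen so that bases of the combined matroid with weight equal to the extended target correspond exactly to feasible matchings of the original instance.
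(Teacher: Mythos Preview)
Your proof is correct and follows the same high-level route as the paper: cast the problem over the transversal matroid on $B$ with $0/1$ indicator weight vectors, reduce the $\ge$ constraints to equality constraints via the slack construction of \cref{sec:alt-contraints}, and invoke the linear-matroid algorithm of \cref{thm:alg}.

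The one place where you do more work than the paper is the passage from independent sets to bases. You handle this by taking a direct sum with a rank-$r$ free matroid and truncating back to rank $r$, which is correct and keeps the matroid linear (via a random projection). The paper's proof simply sets $\Target = q$ and leaves this step implicit; the point is that, since all weights are in $\{0,1\}$ and the constraints are lower bounds, any matchable set $S$ with $W(S)\ge q$ extends to a basis $B^*\supseteq S$ of the transversal matroid with $W(B^*)\ge W(S)\ge q$, so one may search over bases directly without any auxiliary dummies or truncation. Your construction is more general (it would survive even if some weights were negative), while the paper's shortcut is simpler but relies on the specific non-negativity of the indicator weights here.
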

\begin{proof}
	Consider a transversal matroid defined on $G$ with elements $B$. For each $b\in B$ let $W(b)$ be the indicator vector of $G(b)$ and $\Target = q$.
\end{proof}

Partition matroids subject to equality constraints have been
studied in the context of block structure integer linear programming,
albeit without an explicit reference to matroids.
Specifically, this variant was studied under the name \emph{combinatorial $n$-fold}~\cite{knop2020combinatorial}. 
This has led to exponential improvements in the running time of FPT
algorithms for problems in computational social choice, in string problems and
more, see~\cite{knop2020combinatorial} for an overview.  One concrete problem,
that is captured by a partition matroid with equality constraints is 
the Closest String Problem. In this problem, one is given $m$ strings, and the
goal is to compute a string that minimizes the maximum Hamming distance to any
of the input strings. Gramm et al.~\cite{gramm2003fixed} design an
FPT algorithm parameterized by $m$ (see also~\cite{knop2020combinatorial}).
Using our result, we can obtain an FPT algorithm for
the following more general problem.

\defproblem{Closest Base}
{A matroid $M = (E,I)$ with $n$ elements and a subset of bases $S := \{B_1,\ldots,B_m\}$.}
{Find the basis $B$ (not necessarily in $S$) that minimizes $\max_i |B \oplus B_i|$.}

\begin{corollary}
    Closest Base on representable matroids can be solved in $m^{O(m^2)} n^{O(1)}$ randomized time.
\end{corollary}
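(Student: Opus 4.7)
The plan is to reduce Closest Base to an instance of the multi-constraint exact matroid basis problem covered by \cref{thm:alg} for linear matroids. First I would exploit the fact that, for any basis $B$ of $M$, one has $|B \oplus B_i| = 2(\rank(M) - |B \cap B_i|)$, since $|B| = |B_i| = \rank(M)$. Hence minimizing $\max_i |B \oplus B_i|$ is equivalent to maximizing $\min_i |B \cap B_i|$, and the optimum is attained at some even integer $k^* \in \{0, 2, 4, \dots, 2\rank(M)\}$.

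Next I would encode the intersection sizes as a multidimensional weight function. Define $W(e) \in \{0,1\}^m$ by $W(e)_i = 1$ iff $e \in B_i$; then $W(B) = (|B \cap B_1|, \dots, |B \cap B_m|)$, and the largest entry of $W$ is $\Delta = 1$. For a guessed threshold $t = \rank(M) - k/2$, the question "does there exist a basis $B$ of $M$ with $\max_i |B \oplus B_i| \le k$?" becomes "does there exist a basis $B$ of $M$ with $W(B) \ge (t, t, \dots, t)$?", i.e.\ a basis subject to $m$ inequality constraints of the type described in \cref{sec:applications}.

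I would then solve this inequality-constrained problem by invoking the reduction to equality constraints sketched in \cref{sec:applications} (and detailed in \cref{sec:alt-contraints}), namely introducing slack elements of weight $e_i$ and taking a direct sum with a suitable partition matroid on these slack elements. Representability is preserved by the direct sum with a partition matroid over a large enough field, so the resulting matroid is still representable and can be fed, together with its representation, into the linear-matroid version of \cref{thm:alg}. With $\Delta = 1$ and $m$ constraints the running time per instance is $m^{O(m^2)} \cdot n^{O(1)}$. Trying each candidate $k$ (or using binary search) contributes only an additional polynomial factor, giving the claimed bound of $m^{O(m^2)} \cdot n^{O(1)}$ randomized time.

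The main thing to check carefully is that the inequality-to-equality reduction does not blow up either the number of constraints $m$ or the magnitude $\Delta$ and that it preserves representability. Since the slack-variable construction adds at most $O(n)$ new elements per constraint, each with a weight vector supported on a single coordinate with value in $\{0,1\}$, both $m$ and $\Delta$ stay unchanged, and representability transfers through the direct sum with a partition matroid. This is the step that deserves the most attention, but it is standard and already used implicitly in the other corollaries of \cref{sec:applications}.
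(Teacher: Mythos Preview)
Your proposal is correct and follows essentially the same approach as the paper: guess the threshold, encode each $B_i$ by a $0/1$ coordinate, reduce to $m$ inequality constraints, and invoke \cref{thm:alg} for linear matroids. The only cosmetic difference is that the paper uses $W(e)_i = \mathbf{1}[e\notin B_i]$ with upper bounds $W(B)\le (H,\dots,H)$, whereas you use the complementary encoding $W(e)_i = \mathbf{1}[e\in B_i]$ with lower bounds; your extra remark that the slack construction keeps $m$, $\Delta$, and representability intact is a welcome sanity check that the paper leaves implicit.
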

This running time matches the best known running time for the Closest String
problem~\cite{knop2020combinatorial}.
\begin{proof}
	It is equivalent to minimize $H = \max_i |B\setminus B_i| = 1/2 \cdot \max_i |B \oplus B_i|$.
    We start by guessing $H$. Note that $H \le |E| / 2$, so only polynomially many guesses are
	required. The weight vector $W(e) \in \{0, 1\}^m$ is defined as $W(e)_i = 0$ if $e \in B_i$ and $1$
	otherwise. Further, we define the target vector $\Target = (H, H,\dotsc,H)$. We then use~\cref{thm:alg} to find
	a basis $B$ with $W(B) \le \Target$.
\end{proof}

\section{Lower bound for matroid intersection}
\label{sec:intersection}
\begin{figure}
	\centering
	\begin{tikzpicture}
		\node(v1)[draw, circle, inner sep = 0.5mm] at (0, 0) {};
		\node(u1)[draw, circle, inner sep = 0.5mm] at (0, 1) {};
		\node(v2)[draw, circle, inner sep = 0.5mm] at (1.5, 0) {};
		\node(u2)[draw, circle, inner sep = 0.5mm] at (1.5, 1) {};
		\node(d) at (2.25, 0.5) {\dots};
		\node(v3)[draw, circle, inner sep = 0.5mm] at (3, 0) {};
		\node(u3)[draw, circle, inner sep = 0.5mm] at (3, 1) {};
		\draw (v1) -- (u1) node[pos = 0.5, left] {$1$};
		\draw (v2) -- (u2) node[pos = 0.5, left] {$0$};
		\draw (v1) -- (u2) node[pos = 0.25, below] {$0$};
		\draw (u1) -- (v2) node[pos = 0.25, above] {$0$};
		\draw (v2) -- (d);
		\draw (u2) -- (d);
		\draw (d) -- (v3);
		\draw (d) -- (u3);
		\draw (v3) -- (u3) node[pos = 0.5, right] {$0$};
	\end{tikzpicture}
\quad
	\begin{tikzpicture}
		\node(v0)[draw, circle, inner sep = 0.5mm] at (-1.5, 0) {};
		\node(u0)[draw, circle, inner sep = 0.5mm] at (-1.5, 1) {};
		\node(v1)[draw, circle, inner sep = 0.5mm] at (0, 0) {};
		\node(u1)[draw, circle, inner sep = 0.5mm] at (0, 1) {};
		\node(v2)[draw, circle, inner sep = 0.5mm] at (1.5, 0) {};
		\node(u2)[draw, circle, inner sep = 0.5mm] at (1.5, 1) {};
		\node(d) at (2.25, 0.5) {\dots};
		\node(v3)[draw, circle, inner sep = 0.5mm] at (3, 0) {};
		\node(u3)[draw, circle, inner sep = 0.5mm] at (3, 1) {};
		\draw (v0) -- (u0) node[pos = 0.5, left] {$1$};
		\draw (v0) -- (u1) node[pos = 0.25, below] {$0$};
		\draw (u0) -- (v1) node[pos = 0.25, above] {$0$};
		\draw (v1) -- (u1) node[pos = 0.5, left] {$1$};
		\draw (v2) -- (u2) node[pos = 0.5, left] {$0$};
		\draw (v1) -- (u2) node[pos = 0.25, below] {$0$};
		\draw (u1) -- (v2) node[pos = 0.25, above] {$0$};
		\draw (v2) -- (d);
		\draw (u2) -- (d);
		\draw (d) -- (v3);
		\draw (d) -- (u3);
		\draw (v3) -- (u3) node[pos = 0.5, right] {$0$};
	\end{tikzpicture}
	\caption{Example of high proximity and sensitivity in exact matroid intersection}
	\label{fig:intersection}
\end{figure}
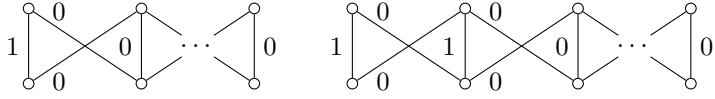
In this section, we remark that low proximity and sensitivity do not generalize
to matroid intersection, even for $m = 1$ and $\Delta = 1$. Our examples are
based on matchings in a bipartite graph $G = (A\cup B,E)$. Although matching in
a bipartite graph does not form an independent set in a single matroid, it can
be represented as a matroid intersection of two partition matroids, where one
partition matroid restricts the degree of each vertex in $A$ to be at most one
and the other does the same for $B$.

We start with the example of a high sensitivity for matroid intersection.
\begin{theorem}\label{thm:sensitivity-example}
	For infinitely many $n \in \mathbb N$ there exist matroids $M = (I, E), M' = (I', E)$
    over the same set of elements and a weight function $w : E \rightarrow \{0, 1\}$,
    such that there exist exactly two 
    two common bases $B$ and $B'$ of both matroids that satisfy:
	\begin{enumerate}
	 \item $w(B) = 0$ and $w(B') = 1$,
	 \item $B \cap B' = \emptyset$ and $|B| = |B'| = n/2$.
	\end{enumerate}
\end{theorem}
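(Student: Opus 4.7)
The plan is to realize the example with a bipartite even cycle. For each even $n \ge 4$, let $k = n/2$ and construct $G = (A \cup B, E)$ with $A = \{a_1, \dotsc, a_k\}$, $B = \{b_1, \dotsc, b_k\}$, and edge set
\begin{displaymath}
E = \{a_i b_i : 1 \le i \le k\} \cup \{b_i a_{i+1} : 1 \le i < k\} \cup \{b_k a_1\},
\end{displaymath}
so that $G$ is the cycle $C_n$ of length $n$. I take $M$ to be the partition matroid on ground set $E$ whose groups are the edges incident to each $a \in A$, each with capacity one; similarly let $M'$ be the partition matroid grouped by the $B$-endpoints. The weight function assigns $w(a_1 b_1) = 1$ and $w(e) = 0$ for every other edge.

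The first step is to identify common bases of $M$ and $M'$ with perfect matchings of $G$. Since every vertex has degree two in $C_n$, every group of either partition matroid is non-empty, so a basis of $M$ contains exactly one edge at each $a \in A$, and similarly for $M'$. Hence common bases coincide with perfect matchings of $G$, and each has cardinality $k = n/2$.

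The core combinatorial fact is that $C_n$ admits exactly two perfect matchings. I would prove this by fixing the edge $a_1 b_1$: once we decide whether $a_1 b_1$ is in the matching, its two cycle-neighbors are forced to the opposite decision, which in turn forces their neighbors, and propagating around the cycle yields precisely the two alternating edge sets $B' = \{a_i b_i : 1 \le i \le k\}$ and $B = \{b_i a_{i+1} : 1 \le i < k\} \cup \{b_k a_1\}$. These are manifestly disjoint with $|B| = |B'| = n/2$; since $B'$ contains the unique weight-one edge $a_1 b_1$ and $B$ avoids it, $w(B') = 1$ and $w(B) = 0$ are immediate.

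The main difficulty is not the verification itself but the choice of construction: any graph with more flexibility in its perfect matchings, such as a ladder of $k$ rungs, would admit many common bases and violate the ``exactly two'' clause. The bipartite even cycle is the minimal structure that simultaneously supports two disjoint perfect matchings and forbids all others, and it is this rigidity that forces a witness of sensitivity of size $n/2$ against a weight gap of only $1$, giving the desired lower bound for infinitely many $n$.
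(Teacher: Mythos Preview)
Your construction is correct and matches the paper's proof essentially verbatim: both take the bipartite even cycle $C_n$, use the two partition matroids associated with the bipartition to encode perfect matchings as common bases, and assign weight $1$ to a single edge. Your write-up simply spells out more of the details (the explicit edge labeling and the propagation argument for why $C_n$ has exactly two perfect matchings) that the paper leaves implicit.
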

\begin{proof}
For $n$ even, we create an instance of bipartite matching consisting of a cycle of length $n$ that has a single edge of weight $1$ and all others of weight $0$, see also left cycle in Figure~\ref{fig:intersection}.
There are exactly two perfect matchings, which correspond to the common bases of the two matroids
and trivially satisfy the properties stated in the theorem.
\end{proof}
Next, we show an example of a high proximity for matroid intersection.
\begin{theorem}
	For infinitely many $n \in \mathbb N$ there exist matroids $M = (I, E), M' = (I', E)$
    and a weight function $w : E \rightarrow \{0, 1\}$ with
	a vertex solution $x^*$ to the continuous relaxation $x\in P_B(M)\cap P_B(M'), w\T x = 1$ such that
	there is a unique common basis $B$ of both $M$ and $M'$ with $w(B) = 1$ and $B$ satisfies
	$\lVert x^* - \chi(B) \rVert_1 = 3/4 \cdot n$.
\end{theorem}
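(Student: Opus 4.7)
The plan is to construct the example explicitly by realising $M$ and $M'$ as the two partition matroids associated with a bipartite graph $G = (A \cup B, E)$: $M$ restricts each $a \in A$ to degree at most $1$, and $M'$ does the same for $B$. The common bases are then exactly the perfect matchings of $G$, so it suffices to display a weighted bipartite graph with the desired extremal behaviour.

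For the graph, I take the right-hand side of \cref{fig:intersection}: start from an even cycle $C_{2k}$ on vertices $v_1, u_1, \dots, v_k, u_k$ in which $v_1 u_1$ has weight $1$ and all other cycle edges have weight $0$; then attach an ear consisting of two new vertices $v_0, u_0$ together with edges $v_0 u_0$ (weight $1$), $v_0 u_1$ (weight $0$), and $u_0 v_1$ (weight $0$). A case analysis on whether $v_0 u_0$ belongs to the matching, combined with the fact that a bipartite even cycle has exactly two perfect matchings, shows that $G$ admits exactly three perfect matchings $B_0, B_1, B_2$ of weights $0, 1, 2$ respectively. In particular, $B_1$ is the unique common basis of weight $1$.

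By Edmonds' matroid intersection theorem, $P_B(M) \cap P_B(M') = \conv\{\chi(B_0), \chi(B_1), \chi(B_2)\}$, a triangle. Slicing with the hyperplane $\{w\T x = 1\}$ produces the segment whose endpoints are $\chi(B_1)$ and $x^* := \tfrac{1}{2}(\chi(B_0) + \chi(B_2))$; hence both are vertices of the sliced polytope. The distance $\|x^* - \chi(B_1)\|_1$ is then computed edge by edge by classifying each edge according to its membership pattern in $(B_0, B_1, B_2)$, yielding a contribution of $0$, $1/2$, or $1$ per edge and a total of order $k$. The main subtlety I foresee is choosing $k$ (and possibly padding the ground set with a bounded number of zero-weight elements that lie in no basis) so that the overall number of elements $n$ and the computed distance satisfy the exact equality $\|x^* - \chi(B_1)\|_1 = \tfrac{3}{4} n$; the existence of such parameters for infinitely many $n$ follows from a straightforward asymptotic argument.
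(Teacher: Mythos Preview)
Your construction is correct in spirit but differs from the paper's, and one detail needs fixing.

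\textbf{What the paper actually does.} The paper does \emph{not} use the single ``cycle with an ear'' graph you describe. Its example is the disjoint union of two even cycles of length $n/2$ each: the first has one edge of weight~$1$, the second has two weight-$1$ edges lying in the same perfect matching. There are four perfect matchings, of weights $0,1,2,3$; the unique weight-$1$ matching $B$ uses the weight-$1$ matching of the first cycle and the weight-$0$ matching of the second. The vertex $x^*$ is obtained by taking the weight-$0$ matching on the first cycle and putting $1/2$ on every edge of the second cycle (i.e.\ $x^*=\tfrac12(\chi(B_0)+\chi(B_2))$ for the weight-$0$ and weight-$2$ matchings). A direct count gives $\|x^*-\chi(B)\|_1=n/2+n/4=3n/4$ for every $n$ divisible by~$4$, with no padding needed.

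\textbf{Your variant.} Your cycle-with-ear graph also works: it has exactly three perfect matchings of weights $0,1,2$, so the sliced polytope is the segment with endpoints $\chi(B_1)$ and $x^*=\tfrac12(\chi(B_0)+\chi(B_2))$, and $x^*$ is indeed a vertex. However, your distance computation yields $\|x^*-\chi(B_1)\|_1=2k+1$ while $n=2k+3$, so the ratio tends to~$1$, not $3/4$. To force equality you must add $p=(2k-5)/3$ loop elements, which is $\Theta(k)$, not a ``bounded number'' as you wrote. The argument still goes through for infinitely many~$n$ (take $k\equiv 1\pmod 3$), but you should correct that phrase.

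In short: your route is valid but less clean; the paper's two-cycle construction hits $3n/4$ on the nose without any padding.
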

\begin{proof}
	For $n$ a multiple of $4$,
consider an instance of bipartite matching similar to the proof
of~\cref{thm:sensitivity-example}, but now 
on a graph with two disjoint cycles of length $n/2$ each, each of which we can think of as the union of two perfect matchings.
The first cycle has a single edge of weight $1$ and the second cycle has two edges of weight $1$, which
appear in the same perfect matching.
All other edges have weight zero, see also Figure~\ref{fig:intersection}.
Now suppose we want to find a perfect matching with total weight equal to $1$.
The only such perfect matching, i.e., the only common basis of both matroids, takes the perfect matching
with the weight $1$ edge in the first cycle and the all-zero perfect matching in the second cycle.
This is also a vertex solution for the continuous relaxation, but there is a second vertex solution
that takes each edge of the second cycle $1/2$ times and the all-zero perfect matching
of the first cycle. 
The distance between the two solutions is $3/4 \cdot n$.
\end{proof}

\bibliographystyle{plain}
\bibliography{references}

\begin{thebibliography}{10}

\bibitem{abdulkadirouglu2003school}
Atila Abdulkadiro{\u{g}}lu and Tayfun S{\"o}nmez.
\newblock {School Choice: A Mechanism Design Approach}.
\newblock {\em American economic review}, 93(3):729--747, 2003.

\bibitem{anegg2022technique}
Georg Anegg, Haris Angelidakis, Adam Kurpisz, and Rico Zenklusen.
\newblock {A Technique for Obtaining True Approximations for {$k$}-Center with
  Covering Constraints}.
\newblock {\em Mathematical Programming}, pages 1--25, 2022.

\bibitem{bandyapadhyay2019constant}
Sayan Bandyapadhyay, Tanmay Inamdar, Shreyas Pai, and Kasturi Varadarajan.
\newblock {A Constant Approximation for Colorful {$k$}-Center}.
\newblock In {\em Proceedings of ESA}, pages 12:1--12:14, 2019.

\bibitem{barahona1987exact}
Francisco Barahona and William~R. Pulleyblank.
\newblock Exact arborescences, matchings and cycles.
\newblock {\em Discrete Applied Mathematics}, 16(2):91--99, 1987.

\bibitem{bera2014approximation}
Suman~K. Bera, Shalmoli Gupta, Amit Kumar, and Sambuddha Roy.
\newblock {Approximation Algorithms for the Partition Vertex Cover Problem}.
\newblock {\em Theoretical Computer Science}, 555:2--8, 2014.

\bibitem{bolukbasi2016man}
Tolga Bolukbasi, Kai-Wei Chang, James~Y. Zou, Venkatesh Saligrama, and Adam~T.
  Kalai.
\newblock {Man is to Computer Programmer as Woman is to Homemaker? Debiasing
  Word Embeddings}.
\newblock {\em Proceedings of {NIPS}}, 29:4349--4357, 2016.

\bibitem{camerini1992random}
Paolo~M. Camerini, Giulia Galbiati, and Francesco Maffioli.
\newblock {Random Pseudo-Polynomial Algorithms for Exact Matroid Problems}.
\newblock {\em Journal of Algorithms}, 13(2):258--273, 1992.

\bibitem{celis2018ranking}
L.~Elisa Celis, Damian Straszak, and Nisheeth~K. Vishnoi.
\newblock {Ranking with Fairness Constraints}.
\newblock In {\em Proceedings of ICALP}, pages 28:1--28:15, 2018.

\bibitem{cook1986sensitivity}
W.~Cook, A.M.H. Gerards, A.~Schrijver, and {\'E}.~Tardos.
\newblock {Sensitivity Theorems in Integer Linear Programming}.
\newblock {\em Mathematical Programming}, 34:251--264, 1986.

\bibitem{cygan2015parameterized}
Marek Cygan, Fedor~V Fomin, {\L}ukasz Kowalik, Daniel Lokshtanov, D{\'a}niel
  Marx, Marcin Pilipczuk, Micha{\l} Pilipczuk, and Saket Saurabh.
\newblock {\em Parameterized algorithms}, volume~5.
\newblock Springer, 2015.

\bibitem{doron2023eptasb}
Ilan Doron-Arad, Ariel Kulik, and Hadas Shachnai.
\newblock {An EPTAS for Budgeted Matching and Budgeted Matroid Intersection via
  Representative Sets}.
\newblock In {\em Proceedings of ICALP}, pages 49:1--49:16, 2023.

\bibitem{doron2023eptas}
Ilan Doron-Arad, Ariel Kulik, and Hadas Shachnai.
\newblock {An EPTAS for Budgeted Matroid Independent Set}.
\newblock In {\em Proceedings of SOSA}, pages 69--83. SIAM, 2023.

\bibitem{doron2023budgeted}
Ilan Doron-Arad, Ariel Kulik, and Hadas Shachnai.
\newblock {Budgeted Matroid Maximization: a Parameterized Viewpoint}.
\newblock In {\em Proceedings of IPEC}, pages 13:1--13:17, 2023.

\bibitem{doronarad2023tight}
Ilan Doron-Arad, Ariel Kulik, and Hadas Shachnai.
\newblock {Lower Bounds for Matroid Optimization Problems with a Linear
  Constraint}.
\newblock In {\em Proceedings of {ICALP}}, pages 56:1--56:20, 2024.

\bibitem{edmonds1970submodular}
Jack Edmonds.
\newblock {Submodular Functions, Matroids, and Certain Polyhedra}.
\newblock In {\em Combinatorial structures and their applications}. Gordon and
  Breach, 1970.

\bibitem{edmonds1979matroid}
Jack Edmonds.
\newblock Matroid intersection.
\newblock In {\em Annals of discrete Mathematics}, volume~4, pages 39--49.
  Elsevier, 1979.

\bibitem{eisenbrand2019proximity}
Friedrich Eisenbrand and Robert Weismantel.
\newblock {Proximity Results and Faster Algorithms for Integer Programming
  Using the Steinitz Lemma}.
\newblock {\em ACM Transactions on Algorithms (TALG)}, 16(1):1--14, 2019.

\bibitem{gramm2003fixed}
Jens Gramm, Rolf Niedermeier, and Peter Rossmanith.
\newblock {Fixed-Parameter Algorithms for Closest String and Related Problems}.
\newblock {\em Algorithmica}, 37:25--42, 2003.

\bibitem{grandoni2010approximation}
Fabrizio Grandoni and Rico Zenklusen.
\newblock {Approximation Schemes for Multi-Budgeted Independence Systems}.
\newblock In {\em Proceedings of {ESA}}, pages 536--548, 2010.

\bibitem{grotschel2012geometric}
Martin Gr{\"o}tschel, L{\'a}szl{\'o} Lov{\'a}sz, and Alexander Schrijver.
\newblock {\em Geometric algorithms and combinatorial optimization}, volume~2.
\newblock Springer Science \& Business Media, 2012.

\bibitem{horsch2024problems}
Florian H{\"{o}}rsch, Andr{\'{a}}s Imolay, Ryuhei Mizutani, Taihei Oki, and
  Tam{\'{a}}s Schwarcz.
\newblock Problems on group-labeled matroid bases.
\newblock In {\em Proceedings of {ICALP}}, pages 86:1--86:20, 2024.

\bibitem{insua1991framework}
David~Rios Insua and Simon French.
\newblock {A Framework for Sensitivity Analysis in Discrete Multi-Objective
  Decision-Making}.
\newblock {\em European journal of operational research}, 54(2):176--190, 1991.

\bibitem{jia2022fair}
Xinrui Jia, Kshiteej Sheth, and Ola Svensson.
\newblock {Fair Colorful {$k$}-Center Clustering}.
\newblock {\em Mathematical Programming}, 192(1):339--360, 2022.

\bibitem{kamada2012stability}
Yuichiro Kamada and Fuhito Kojima.
\newblock {Stability and Strategy-Proofness for Matching with Constraints: A
  Problem in the Japanese Medical Match and Its Solution}.
\newblock {\em American Economic Review}, 102(3):366--370, 2012.

\bibitem{kannan1983polynomial}
Ravindran Kannan.
\newblock {Polynomial-Time Aggregation of Integer Programming Problems}.
\newblock {\em Journal of the ACM (JACM)}, 30(1):133--145, 1983.

\bibitem{khachiyan1980polynomial}
Leonid~G. Khachiyan.
\newblock {Polynomial Algorithms in Linear Programming}.
\newblock {\em USSR Computational Mathematics and Mathematical Physics},
  20(1):53--72, 1980.

\bibitem{knop2020combinatorial}
Du{\v{s}}an Knop, Martin Kouteck{\`y}, and Matthias Mnich.
\newblock Combinatorial {$n$}-fold integer programming and applications.
\newblock {\em Mathematical Programming}, 184(1):1--34, 2020.

\bibitem{kulik2010there}
Ariel Kulik and Hadas Shachnai.
\newblock {There is No EPTAS for Two-dimensional Knapsack}.
\newblock {\em Information Processing Letters}, 110(16):707--710, 2010.

\bibitem{lawler1975matroid}
Eugene~L. Lawler.
\newblock Matroid intersection algorithms.
\newblock {\em Mathematical programming}, 9(1):31--56, 1975.

\bibitem{liu2023congruency}
Siyue Liu and Chao Xu.
\newblock On the congruency-constrained matroid base.
\newblock In {\em Proceedings of {IPCO}}, pages 280--293, 2024.

\bibitem{marx2009parameterized}
D{\'a}niel Marx.
\newblock A parameterized view on matroid optimization problems.
\newblock {\em Theoretical Computer Science}, 410(44):4471--4479, 2009.

\bibitem{mulmuley1987matching}
Ketan Mulmuley, Umesh~V. Vazirani, and Vijay~V. Vazirani.
\newblock Matching is as easy as matrix inversion.
\newblock {\em Combinatorica}, 7(1):105--113, 1987.

\bibitem{papadimitriou1981complexity}
Christos~H. Papadimitriou.
\newblock {On the Complexity of Integer Programming}.
\newblock {\em Journal of the ACM (JACM)}, 28(4):765--768, 1981.

\bibitem{papadimitriou1982complexity}
Christos~H. Papadimitriou and Mihalis Yannakakis.
\newblock {The Complexity of Restricted Spanning Tree Problems}.
\newblock {\em Journal of the ACM (JACM)}, 29(2):285--309, 1982.

\bibitem{sankar2021matchings}
G.S. Sankar, A.~Louis, M.~Nasre, and P.~Nimbhorkar.
\newblock {Matchings with Group Fairness Constraints: Online and Offline
  Algorithms}.
\newblock In {\em Proceedings of {IJCAI}}, pages 377--383, 2021.

\bibitem{schrijver2003combinatorial}
Alexander Schrijver.
\newblock {\em {Combinatorial Optimization: {P}olyhedra and Efficiency}},
  volume~24.
\newblock Springer, 2003.

\bibitem{schrijver1990spanning}
Alexander Schrijver and Paul~D. Seymour.
\newblock {Spanning Trees of Different Weights}.
\newblock {\em Polyhedral combinatorics}, 1:281--288, 1990.

\bibitem{triantaphyllou1997sensitivity}
Evangelos Triantaphyllou and Alfonso S{\'a}nchez.
\newblock {A Sensitivity Analysis Approach for some Deterministic
  Multi-Criteria Decision-Making Methods}.
\newblock {\em Decision sciences}, 28(1):151--194, 1997.

\end{thebibliography}

\end{document}